\title{Strip Planarity Testing of Embedded Planar Graphs}
\date{}
\author{Patrizio Angelini$^1$, Giordano Da Lozzo$^1$, Giuseppe Di
Battista$^1$, Fabrizio Frati$^2$
\institute{
$1$ Dipartimento di Ingegneria, Roma Tre University, Italy\\
\email{\{angelini,dalozzo,gdb\}@dia.uniroma3.it}\\
$2$ School of Information Technologies, The University of Sydney, Australia\\
\email{brillo@it.usyd.edu.au}
}}
\newcommand{\remove}[1]{}
\renewcommand{\int}{int}
\newtheorem{claimx}{Claim}
\renewenvironment{proof}
{{\bf Proof:}}{\hspace*{\fill}$\Box$\par\vspace{2mm}}
\newenvironment{proof-strict-quasijagged}
{\vspace{2mm}{\bf Proof of Claim~\ref{cl:strict-quasijagged}:}}{\hspace*{\fill}$\Box$\par\vspace{2mm}}
\newenvironment{proofsketch}
{{\bf Proof sketch:}}{\hspace*{\fill}$\Box$\par\vspace{2mm}}
\begin{document}

\maketitle

\begin{abstract}
In this paper we introduce and study the {\em strip planarity testing} problem, which takes as an input a planar graph $G(V,E)$ and a function $\gamma:V \rightarrow \{1,2,\dots,k\}$ and asks whether a planar drawing of $G$ exists such that each edge is monotone in the $y$-direction and, for any $u,v\in V$ with $\gamma(u)<\gamma(v)$, it holds $y(u)<y(v)$. The problem has strong relationships with some of the most deeply studied variants of the planarity testing problem, such as {\em clustered planarity}, {\em upward planarity}, and {\em level planarity}. We show that the problem is polynomial-time solvable if $G$ has a fixed planar embedding.
\end{abstract}

\section{Introduction} \label{se:introduction}

Testing the planarity of a given graph is one of the oldest and most deeply
investigated problems in algorithmic graph theory. A celebrated result of
Hopcroft and Tarjan~\cite{ht-ept-74} states that the planarity testing problem
is solvable in linear time.

A number of interesting variants of the planarity testing problem have been
considered in the literature~\cite{s-ttp-13}. Such variants mainly focus on testing, for a given
planar graph $G$, the existence of a planar drawing of $G$ satisfying certain
{\em constraints}. For example the {\em partial embedding planarity}
problem~\cite{adfj-tppeg-10,jkr-kttppeg-11} asks whether a plane drawing $\cal
G$ of a given planar graph $G$ exists in which the drawing of a subgraph $H$ of
$G$ in $\cal G$ coincides with a given drawing $\cal H$ of $H$. {\em Clustered
planarity testing}~\cite{df-ectefcgsf-j09,jkkpsv-cpscce-j09}, {\em upward
planarity testing}~\cite{bdlm-udtg-94,GargT01,HuttonL96}, {\em
level planarity testing}~\cite{jlp-lptlt-98}, {\em embedding
constraints planarity testing}~\cite{gkm-ptoei-08}, {\em radial level planarity
testing}~\cite{bbf-rlpt-05}, and {\em clustered level planarity
testing}~\cite{fb-clp-04} are further examples of problems falling in this
category.

In this paper we introduce and study the {\em strip planarity testing} problem, which is defined as follows. The input of the problem consists of a planar graph $G(V,E)$ and of a function $\gamma:V \rightarrow \{1,2,\dots,k\}$. The problem asks whether a {\em strip planar} drawing of $(G,\gamma)$ exists, i.e. a planar drawing of $G$ such that each edge is monotone in the $y$-direction and, for any $u,v\in V$ with $\gamma(u)<\gamma(v)$, it holds $y(u)<y(v)$. The name ``strip'' planarity comes from the fact that, if a strip planar drawing $\Gamma$ of $(G,\gamma)$ exists, then $k$ disjoint horizontal strips $\gamma_1,\gamma_2,\dots,\gamma_k$ can be drawn in $\Gamma$ so that $\gamma_i$ lies below $\gamma_{i+1}$, for $1\leq i\leq k-1$, and so that $\gamma_i$ contains a vertex $x$ of $G$ if and only if $\gamma(x)=i$, for $1\leq i\leq k$. It is not difficult to argue that strips $\gamma_1,\gamma_2,\dots,\gamma_k$ can be given as part of the input, and the problem is to decide whether $G$ can be planarly drawn so that each edge is monotone in the $y$-direction and each vertex $x$ of $G$ with $\gamma(x)=i$ lies in the strip $\gamma_i$. That is, arbitrarily predetermining the placement of the strips does not alter the possibility of constructing a strip planar drawing of $(G,\gamma)$.

Before presenting our result, we discuss the strong relationships of the strip planarity testing problem with three famous graph drawing problems.

{\bf Strip planarity and clustered planarity.} The {\em $c$-planarity testing}
problem takes as an input a {\em clustered graph} $C(G,T)$, that is a planar
graph $G$ together with a rooted tree $T$, whose leaves are the vertices of $G$.
Each internal node $\mu$ of $T$ is called {\em cluster} and is associated with
the set $V_{\mu}$ of vertices of $G$ in the subtree of $T$ rooted at $\mu$. The
problem asks whether a {\em $c$-planar drawing} exists, that is a planar drawing
of $G$ together with a drawing of each cluster $\mu\in T$ as a simple closed
region $R_\mu$ so that: (i) if $v\in V_\mu$, then $v\in R_{\mu}$; (ii) if $V_\nu \subset V_\mu$, then $R_{\nu}\subset R_{\mu}$; (iii) if $V_\nu \cap V_\mu = \emptyset$, then $R_{\nu}\cap  R_{\mu}=\emptyset$; and (iv) each edge of $G$ intersects the border of $R_\mu$ at most once. Determining the time
complexity of testing the $c$-planarity of a given clustered graph is a
long-standing open problem. See~\cite{df-ectefcgsf-j09,jkkpsv-cpscce-j09} for
two recent papers on the topic. An instance $(G,\gamma)$ of the strip planarity
testing problem naturally defines a clustered graph $C(G,T)$, where $T$ consists
of a root having $k$ children $\mu_1,\dots,\mu_k$ and, for every $1\leq j\leq
k$, cluster $\mu_j$ contains every vertex $x$ of $G$ such that $\gamma(x)=j$.
The $c$-planarity of $C(G,T)$ is a necessary condition for the strip planarity
of $(G,\gamma)$, since suitably bounding the strips in a strip planar drawing of
$(G,\gamma)$ provides a $c$-planar drawing of $C(G,T)$. However, the
$c$-planarity of $C(G,T)$ is not sufficient for the strip planarity of
$(G,\gamma)$ (see Fig.~\ref{fig:strip-clustered}(a)). It turns out that strip planarity testing {\em coincides} with a special case of a problem opened by Cortese et al.~\cite{cdpp-cccc-05,cdbpp-ecpg-09} and related to $c$-planarity testing. The problem asks whether a graph $G$ can be planarly embedded ``inside'' an host graph $H$, which can be thought as having ``fat'' vertices and edges, with each vertex and edge of $G$ drawn inside a prescribed vertex and a prescribed edge of $H$, respectively. It is easy to see that the strip planarity testing problem coincides with this problem in the case in which $H$ is a path.

\begin{figure}[tb]
\begin{center}
\begin{tabular}{c c}
\mbox{\includegraphics[scale=0.5]{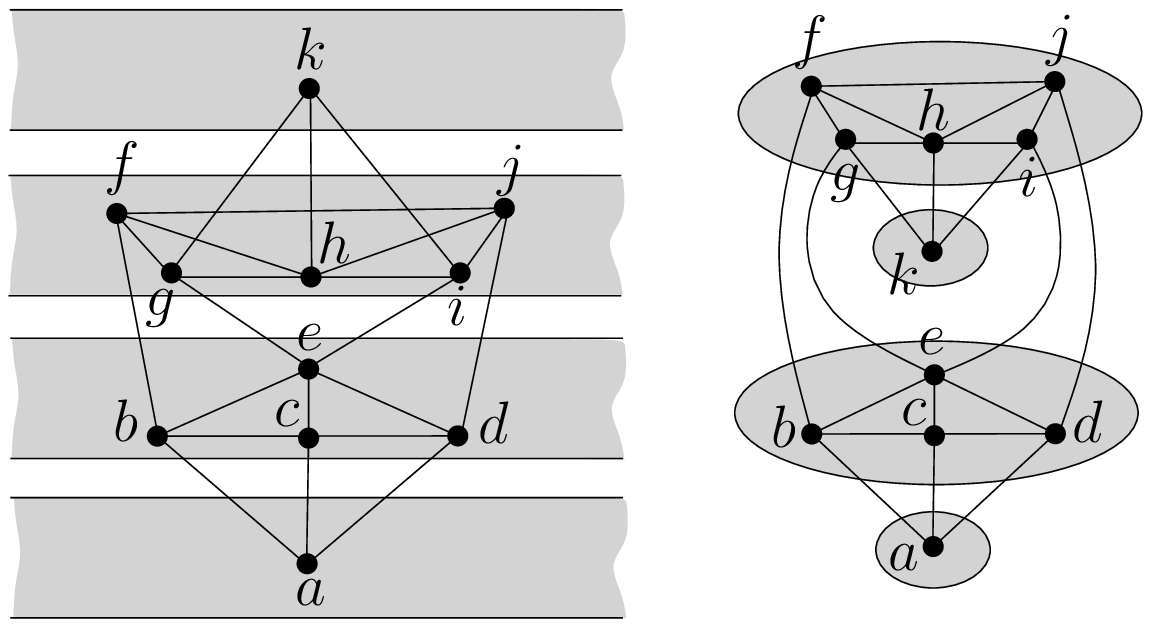}} \hspace{10mm}
&
\mbox{\includegraphics[scale=0.5]{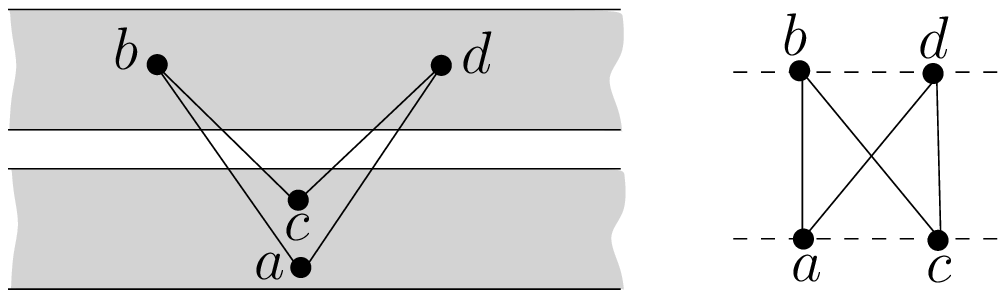}} \\
(a) \hspace{10mm}& (b)
\end{tabular}
\caption{(a) A negative instance $(G,\gamma)$ of the strip planarity testing problem whose associated clustered graph $C(G,T)$ is $c$-planar. (b) A positive instance $(G,\gamma)$ of the strip planarity testing problem that is not level planar.}
\label{fig:strip-clustered}
\end{center}
\vspace{-5mm}
\end{figure}

{\bf Strip planarity and level planarity.} The {\em level planarity testing}
problem takes as an input a planar graph $G(V,E)$ and a function $\gamma:V
\rightarrow \{1,2,\dots,k\}$ and asks whether a planar drawing of $G$ exists
such that each edge is monotone in the $y$-direction and each vertex $u \in V$
is drawn on the horizontal line $y=\gamma(u)$. The level planarity testing (and
embedding) problem is known to be solvable in linear
time~\cite{jlp-lptlt-98}, although a sequence of incomplete
characterizations by forbidden subgraphs~\cite{fk-mlnpt-07,hkl-clpg-04} (see
also~\cite{efk-clptmp-10}) has revealed that the problem is not yet fully
understood. The similarity of the level planarity testing problem with the strip
planarity testing problem is evident: They have the same input, they both
require planar drawings with $y$-monotone edges, and they both constrain the
vertices to lie in specific regions of the plane; they only differ for the fact
that such regions are horizontal lines in one case, and horizontal strips in the
other one. Clearly the level planarity of an instance $(G,\gamma)$ is a
sufficient condition for the strip planarity of $(G,\gamma)$, as a level planar
drawing is also a strip planar drawing. However, it is easy to construct instances $(G,\gamma)$ that are strip planar and yet not level planar, even if we require that the instances are {\em strict}, i.e., no edge $(u,v)$ is such that $\gamma(u)=\gamma(v)$.
See Fig.~\ref{fig:strip-clustered}(b). Also, the approach of~\cite{jlp-lptlt-98} seems to be not applicable to test the strip planarity of a graph. Namely, J{\"u}nger et al.~\cite{jlp-lptlt-98} visit the instance $(G,\gamma)$ one level at a time, representing with a PQ-tree~\cite{bl-tcop-76} the possible orderings of the vertices in level $i$ that are consistent with a level planar embedding of the subgraph of $G$ induced by levels $\{1,2,\dots,i\}$. However, when visiting an instance $(G,\gamma)$ of the strip planarity testing problem one strip at a time, PQ-trees seem to be not powerful enough to represent the possible orderings of the vertices in strip $i$ that are consistent with a strip planar embedding of the subgraph of $G$ induced by strips $\{1,2,\dots,i\}$.

{\bf Strip planarity and upward planarity.} The {\em upward planarity testing}
problem asks whether a given directed graph $\overrightarrow{G}$ admits an {\em
upward planar drawing}, i.e., a drawing which is planar and such that each edge
is represented by a curve monotonically increasing in the $y$-direction,
according to its orientation. Testing the upward planarity of a directed graph
$\overrightarrow{G}$ is an $\cal{NP}$-hard problem~\cite{GargT01}, however it is
polynomial-time solvable, e.g., if $\overrightarrow{G}$ has a fixed
embedding~\cite{bdlm-udtg-94}, or if it has a single-source~\cite{HuttonL96}. A strict instance $(G,\gamma)$ of
the strip planarity testing problem naturally defines a directed graph
$\overrightarrow{G}$, by directing an edge $(u,v)$ of $G$ from $u$ to $v$ if
$\gamma(u)<\gamma(v)$. It is easy to argue that the upward planarity of
$\overrightarrow{G}$ is a necessary and not sufficient condition for the strip
planarity of $(G,\gamma)$ (see Fig.s~\ref{fig:non-strip-planar-upward}(a) and~\ref{fig:non-strip-planar-upward}(b)). Roughly speaking, in an upward planar drawing different parts of the
graph are free to ``nest'' one into the other, while in a strip planar drawing,
such a nesting is only allowed if coherent with the strip assignment.

\begin{figure}[tb]
\begin{center}
\begin{tabular}{c c}
\mbox{\includegraphics[scale=0.48]{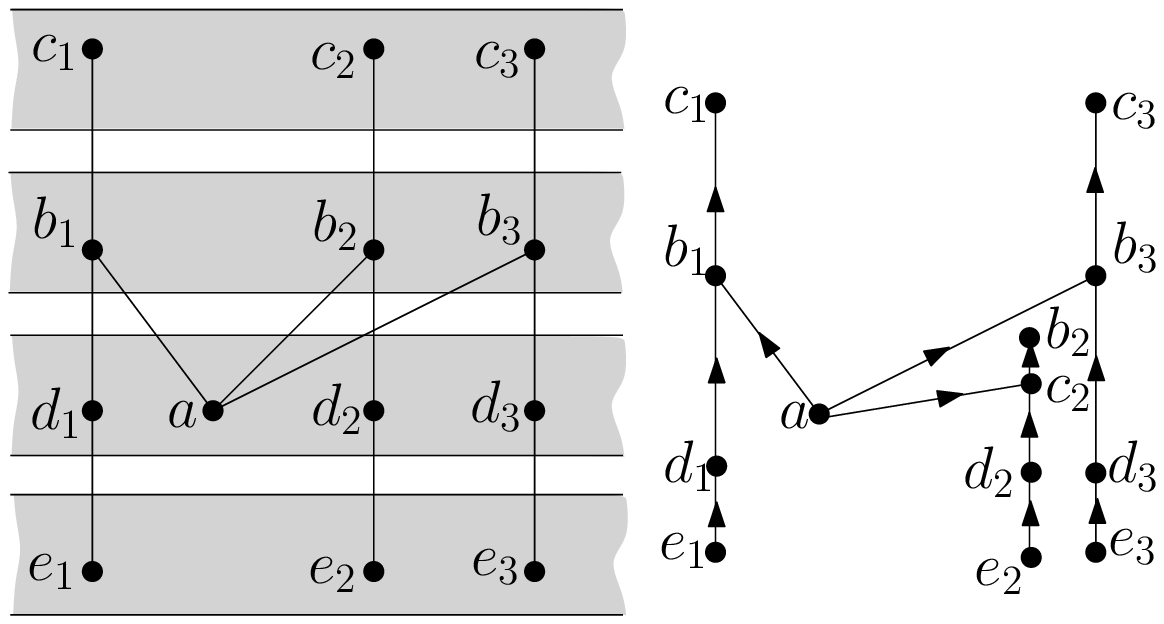}} \hspace{5mm}
&
\mbox{\includegraphics[scale=0.48]{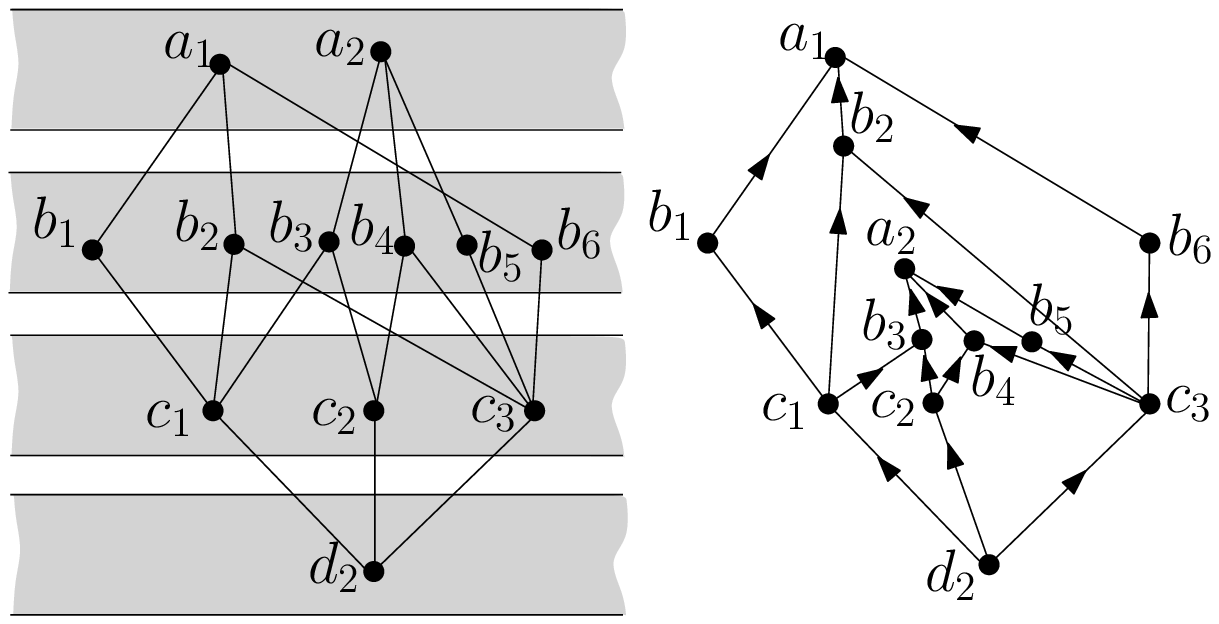}} \\
(a) \hspace{5mm}& (b)
\end{tabular}
\caption{Two negative instances $(G_1,\gamma_1)$ (a) and $(G_2,\gamma_2)$ (b) whose associated directed graphs are upward planar, where $G_1$ is a tree and $G_2$ is a subdivision of a triconnected plane graph.}
\label{fig:non-strip-planar-upward}
\end{center}
\vspace{-5mm}
\end{figure}

In this paper, we show that the strip planarity testing problem is polynomial-time solvable for planar graphs with a fixed planar embedding. Our approach consists of performing a sequence of modifications to the input instance $(G,\gamma)$ (such modifications consist mainly of insertions of graphs inside the faces of $G$) that ensure that the instance satisfies progressively stronger constraints while not altering its strip planarity. Eventually, the strip planarity of $(G,\gamma)$ becomes equivalent to the upward planarity of its associated directed graph, which can be tested in polynomial time.

The rest of the paper is organized as follows. In Section~\ref{se:preliminaries} we present some preliminaries; in Section~\ref{se:theorems} we prove our result; finally, in Section~\ref{se:conclusions} we conclude and present open problems. Because of space limitations, the proofs are sketched or omitted. 

\section{Preliminaries} \label{se:preliminaries}

In this section we present some definitions and preliminaries.

A \emph{drawing} of a graph is a mapping of each vertex to a distinct point of the plane and of each edge to a Jordan curve between the endpoints of the edge. A \emph{planar drawing} is such that no two edges intersect except, possibly, at common endpoints. A planar drawing of a graph determines a circular ordering of the edges incident to each vertex. Two drawings of the same graph are \emph{equivalent} if they determine the same circular orderings around each vertex. A \emph{planar embedding} (or \emph{combinatorial embedding}) is an equivalence class of planar drawings. A planar drawing partitions the plane into topologically connected regions, called {\em faces}. The unbounded face is the {\em outer face}. Two planar drawings with the same combinatorial embedding have the same faces. However, such drawings could still differ for their outer faces. A {\em plane embedding} of a graph $G$ is a planar embedding of $G$ together with a choice for its outer face.

In this paper we will show how to test in polynomial time whether a graph with a prescribed {\em plane} embedding is strip planar. Since a graph with a fixed combinatorial embedding has $O(n)$ choices for the outer face, this implies that testing the strip planarity of a graph with a prescribed {\em combinatorial} embedding is also a polynomial-time solvable problem. In the reminder of the paper, we will assume all the considered graphs to have a prescribed plane embedding, even when not explicitly mentioned.

For the sake of simplicity of description, in the following we assume that the considered plane graphs are {\em $2$-connected}, unless otherwise specified. We will sketch in the conclusions how to extend our results to simply-connected and even non-connected plane graphs. We now define some concepts related to strip planarity.

An instance $(G,\gamma)$ of the strip planarity testing problem is {\em strict} if it contains no intra-strip edge, where an edge $(u,v)$ is {\em intra-strip} f $\gamma(u)=\gamma(v)$. An instance $(G,\gamma)$ of strip planarity is {\em proper}
if, for every edge $(u,v)$ of $G$, it holds $\gamma(v)-1\leq\gamma(u) \leq
\gamma(v)+1$. Given any non-proper instance of strip planarity, one can replace
every edge $(u,v)$ such that $\gamma(u)=\gamma(v)+j$, for some $j\geq 2$, with a
path $(v=u_1,u_2,\dots,u_{j+1}=u)$ such that $\gamma(u_{i+1})=\gamma(u_i)+1$,
for every $1\leq i\leq j$, thus obtaining a proper instance $(G',\gamma')$ of
the strip planarity testing problem. It is easy to argue that
$(G,\gamma)$ is strip planar if and only if $(G',\gamma')$ is strip planar. In
the following, we will assume all the considered instances of the strip
planarity testing problem to be proper, even when not explicitly mentioned. 

Let $(G,\gamma)$ be an instance of the strip planarity testing problem. A path $(u_1,\dots,u_j)$ in $G$ is {\em monotone} if $\gamma(u_i)=\gamma(u_{i-1})+1$, for every $2\leq i\leq j$. For any face $f$ in $G$, we denote by $C_f$ the simple cycle delimiting the border of $f$. Let $f$ be a face of $G$, let $u$ be a vertex incident to $f$, and let $v$ and
$z$ be the two neighbors of $u$ on $C_f$. We say that $u$ is a {\em local
minimum} for $f$ if $\gamma(v)=\gamma(z)=\gamma(u)+1$, and it
is a {\em local maximum} for $f$ if $\gamma(v)=\gamma(z)=\gamma(u)-1$. Also, we say that $u$ is a {\em global minimum} for $f$ (a {\em global maximum} for $f$)
if $\gamma(w)\geq \gamma(u)$ (resp. $\gamma(w)\leq \gamma(u)$), for every vertex $w$ incident to $f$. A global minimum $u_m$ and a global maximum
$u_M$ for a face $f$ are {\em consecutive} in $f$ if no global minimum and no global maximum exists in one of the two paths connecting $u_m$ and $u_M$ in
$C_f$. A local minimum $u_m$ and a local maximum $u_M$ for a face $f$ are {\em
visible} if one of the paths $P$ connecting $u_m$ and $u_M$ in $C_f$ is such
that, for every vertex $u$ of $P$, it holds $\gamma(u_m) < \gamma(u) <
\gamma(u_M)$.

\begin{definition}
An instance $(G,\gamma)$ of the strip planarity problem is {\em quasi-jagged} if it is strict and if, for every face $f$ of $G$ and for any two
visible local minimum $u_m$ and local maximum $u_M$ for $f$, one of the
two paths connecting $u_m$ and $u_M$ in $C_f$ is monotone.
\end{definition}

\begin{definition}
An instance $(G,\gamma)$ of the strip planarity problem is {\em jagged}
if it is strict and if, for every face $f$ of $G$, any local minimum for $f$ is
a global minimum for $f$, and every local maximum for $f$ is a global maximum
for $f$.
\end{definition}


\section{How To Test Strip Planarity} \label{se:theorems}

In this section we show an algorithm to test strip planarity.

\remove{In
Section~\ref{subse:general-strict} we show how to reduce in polynomial time a
general instance to an equivalent strict
instance. In Section~\ref{subse:strict-quasijagged} we show how to reduce in
polynomial time a strict instance to an
equivalent quasi-jagged instance. In Section~\ref{subse:quasijagged-jagged} we
show how to reduce in polynomial time a quasi-jagged instance to a jagged instance. Finally, in
Section~\ref{subse:jagged-upward} we show that testing the strip planarity of a
jagged instance is equivalent to test the upward planarity of the associated
directed graph, which can be done in polynomial time.}

\subsection{From a General Instance to a Strict Instance}\label{subse:general-strict}

In this section we show how to reduce a general instance of the strip planarity
testing problem  to an equivalent strict instance.

\begin{lemma} \label{le:general-strict}
Let $(G,\gamma)$ be an instance of the strip planarity testing problem. Then, there exists a polynomial-time algorithm that either constructs an equivalent strict instance $(G^*,\gamma^*)$ or decides that $(G,\gamma)$ is not strip planar.
\end{lemma}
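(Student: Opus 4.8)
The plan is to reduce a general instance to a strict one by eliminating intra-strip edges. The main idea is that within a single strip $\gamma_i$, the subgraph $G_i$ induced by the vertices $u$ with $\gamma(u)=i$ consists of some connected components, and in any strip planar drawing each such component must be drawn entirely within the horizontal strip $\gamma_i$. The key observation is that a strip planar drawing forces each component of $G_i$ to be drawn as a planar graph confined to $\gamma_i$, and the rest of $G$ must attach to it from above (strip $i+1$) and below (strip $i-1$) in a manner consistent with the planar embedding. So the reduction should replace each intra-strip component by a structure that records which of its vertices can ``see'' the strips above and below, while contracting the intra-strip edges so that no two endpoints share a level.

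First I would process each strip level $i$ independently. For a connected component $K$ of $G_i$, I would examine the cyclic sequence of edges leaving $K$ toward strips $i-1$ and $i+1$, as determined by the fixed plane embedding. Because the embedding is prescribed, the rotation system around $K$ is fixed, so the interleaving pattern of ``up-edges'' (to strip $i+1$) and ``down-edges'' (to strip $i-1$) around the outer boundary of $K$ is determined. The plan is to check that this pattern is consistent with $K$ being squeezable into a single strip: intuitively, reading around the boundary of $K$, the up-edges and down-edges must be separable so that all up-edges can exit the top of the strip and all down-edges the bottom. If the pattern forbids this (for instance if up- and down-edges alternate in a way that cannot be untangled inside a strip), then I would declare $(G,\gamma)$ not strip planar. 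Otherwise, I would replace $K$ by a path or a small gadget spanning levels $i-1$, $i$, $i+1$ that preserves the relevant attachment order, assigning the new internal vertices distinct $\gamma$-values so that no intra-strip edge survives, while maintaining $G$'s planarity and not changing strip planarity.

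The correctness argument has two directions. For the forward direction, given a strip planar drawing of $(G,\gamma)$, I would restrict attention to strip $\gamma_i$, observe that each component $K$ occupies a bounded region there, and read off the ordering of its external attachments to conclude that the gadget replacing $K$ can be inserted, yielding a strip planar drawing of $(G^*,\gamma^*)$. For the reverse direction, given a strip planar drawing of $(G^*,\gamma^*)$, I would contract the gadgets back, placing all vertices of $K$ at level $i$, and argue that the $y$-monotonicity of the surviving edges plus the preserved attachment order allows $K$ to be redrawn inside $\gamma_i$ without crossings. The polynomial running time follows because each component is processed once and the gadgets have size linear in the degree of $K$.

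The hard part will be formalizing exactly which interleaving patterns of up-edges and down-edges around a component $K$ are realizable inside a single strip, and proving that the local replacement is both necessary (any obstruction genuinely prevents strip planarity) and sufficient (a valid pattern always admits the gadget insertion without spoiling planarity elsewhere). In particular I expect subtlety when a component $K$ separates the plane so that parts of $G$ are nested inside faces of $K$: the fixed embedding constrains how these nested parts reach strips $i-1$ and $i+1$, and the gadget must faithfully encode this nesting. Handling the interaction between the confinement to a strip and the global planar embedding, rather than treating each component in isolation, is where the real care is needed; I would likely argue that because the embedding is fixed, the separation structure is known in advance, so the consistency check reduces to a finite set of local conditions on the rotation around each intra-strip component.
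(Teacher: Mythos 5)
Your proposal takes a genuinely different route from the paper, but it has a gap at its core. You propose to process each connected component $K$ of the intra-strip subgraph $G_i$ as a unit, check whether the cyclic interleaving of up-edges and down-edges around $K$ is ``realizable inside a strip,'' and replace $K$ by a gadget. The entire difficulty of the lemma is concentrated in that check: you must (a) state exactly which patterns are realizable, (b) prove that a failing pattern is a genuine obstruction to strip planarity, and (c) prove that a passing pattern guarantees $K$ can be redrawn inside $\gamma_i$ with its attachments in the prescribed order, \emph{and} that doing this for every component simultaneously, together with whatever the rest of $G$ does in the other strips and in the faces of $K$, yields a global drawing. You explicitly defer all of this to ``the hard part,'' so as written the proposal does not prove the lemma; it restates it in a localized form. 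Point (c) is particularly delicate because $K$ can have cut vertices (so the boundary walk revisits vertices), and the condition is not a property of a single cyclic sequence alone; moreover, the two directions of the equivalence require explicit redrawing arguments that you only gesture at.

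The paper avoids characterizing realizable patterns altogether by working one intra-strip edge $(u,v)$ at a time. If $(u,v)$ lies on a separating $3$-cycle $(u,v,z)$, the instance is split at that triangle and the inner part is tested directly: it is either a one-strip instance (trivial) or a two-strip instance, whose strip planarity is shown to be equivalent to $c$-planarity of a two-cluster clustered graph, decidable in linear time by known results. Otherwise $(u,v)$ is contracted, which preserves simplicity precisely because no separating triangle passes through it, and equivalence is proved by two explicit constructions (a convex straight-line drawing of a triangulated auxiliary graph in one direction, a local expansion of the contracted vertex inside a small disk in the other). Each step removes a vertex, giving the polynomial bound. If you want to pursue your component-based approach, you would need to supply the realizability characterization and both directions of its correctness; the edge-by-edge contraction strategy is a way to sidestep exactly that.
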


Consider any intra-strip edge $(u,v)$ in $G$, if it exists. We distinguish two cases.

In {\em Case 1}, $(u,v)$ is an edge of a $3$-cycle $(u,v,z)$ that contains vertices in its interior in $G$. Observe that, $\gamma(u)-1\leq \gamma(z) \leq \gamma(u)+1$. Denote by $G'$ the plane subgraph of $G$ induced by the vertices lying outside cycle $(u,v,z)$ together with $u$, $v$, and $z$ (this graph might coincide with cycle $(u,v,z)$ if such a cycle delimits the outer face of $G$); also, denote by $G''$ the plane subgraph of $G$ induced by the vertices lying inside cycle $(u,v,z)$ together with $u$, $v$, and $z$. Also, let $\gamma'(x)=\gamma(x)$, for every vertex $x$ in $G'$, and let $\gamma''(x)=\gamma(x)$, for every vertex $x$ in $G''$. We have the following:

\begin{claimx} \label{cl:general-strict-1}
$(G,\gamma)$ is strip planar if and only if $(G',\gamma')$ and $(G'',\gamma'')$ are both strip planar.
\end{claimx}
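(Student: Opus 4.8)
The plan is to prove the two directions of the equivalence separately, with the forward implication being essentially bookkeeping and the reverse implication requiring a gluing argument.

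For the forward direction, I would start from a strip planar drawing $\Gamma$ of $(G,\gamma)$. Since $G'$ and $G''$ are subgraphs of $G$, the restrictions $\Gamma|_{G'}$ and $\Gamma|_{G''}$ are planar, have $y$-monotone edges, and respect the strip assignment; hence they are strip planar drawings of $(G',\gamma')$ and $(G'',\gamma'')$, respectively. I would additionally record that, because the plane embedding of $G$ places the interior vertices of the cycle $(u,v,z)$ inside the closed curve that $\Gamma$ assigns to this cycle and all remaining vertices outside it, the cycle bounds an empty face $f'$ in $\Gamma|_{G'}$ and encloses all of $G''$ in $\Gamma|_{G''}$. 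This observation is exactly what the reverse direction will need.

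For the reverse direction, I would take strip planar drawings $\Gamma'$ of $(G',\gamma')$ and $\Gamma''$ of $(G'',\gamma'')$ and glue them along the shared triangle. In $\Gamma'$ the cycle $(u,v,z)$ bounds a face $f'$ containing no vertex of $G'$, and I may choose the plane embedding of $G''$ so that $(u,v,z)$ delimits its outer face. The goal is to insert the interior of $\Gamma''$ into $f'$ without creating crossings and without violating $y$-monotonicity or strip membership. The tool I would use is a \emph{level-preserving} homeomorphism of the plane, i.e., one mapping each horizontal line $y=c$ to itself; such a map preserves planarity, the $y$-monotonicity of edges, and the strip of every vertex, so it sends strip planar drawings to strip planar drawings.

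The heart of the argument, and the step I expect to be the main obstacle, is to make the boundary cycle $(u,v,z)$ coincide in the two drawings. I would argue that any two strip planar drawings of the triangle $(u,v,z)$ with the same values $\gamma(u)$, $\gamma(v)$, $\gamma(z)$ are related by a level-preserving homeomorphism: the three edges are $y$-monotone arcs with prescribed endpoint strips, so they can be matched level by level, and the bounded disk they enclose can be matched accordingly. Applying such a homeomorphism to $\Gamma''$ makes its outer triangle coincide with the boundary of $f'$ in $\Gamma'$ while keeping $\Gamma''$ strip planar. I can then superimpose the two drawings: since $f'$ is empty in $\Gamma'$ and everything in $\Gamma''$ other than the triangle lies strictly inside it, no new crossings arise, every edge stays $y$-monotone, and every vertex remains in its strip, yielding a strip planar drawing of $(G,\gamma)$. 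Particular care must be taken to verify the existence of the aligning homeomorphism in the degenerate configurations allowed by properness, where $z$ lies in the same strip as $u$ and $v$, or exactly one strip above or below.
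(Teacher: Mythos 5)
Your overall strategy coincides with the paper's: necessity by restriction, sufficiency by deforming $\Gamma''$ so that its outer triangle coincides with the empty triangular face of $\Gamma'$ and then superimposing. However, the specific tool you propose for the alignment --- a \emph{level-preserving} homeomorphism, i.e.\ one fixing every horizontal line $y=c$ (or, in the weaker reading, mapping horizontal lines to horizontal lines) --- is too rigid, and the key lemma you rest on is false. Such a map preserves the relative vertical order of all points, but the relative vertical order of $u$ and $v$ is \emph{not} an invariant of strip planar drawings: since $(u,v)$ is an intra-strip edge, $u$ and $v$ always lie in the same strip, and the independently obtained drawings $\Gamma'$ and $\Gamma''$ may well place $u$ below $v$ in one and above $v$ in the other. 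In that situation no homeomorphism mapping horizontal lines to horizontal lines can carry the triangle of $\Gamma''$ onto the triangle of $\Gamma'$, so your ``match level by level'' argument breaks at the vertices. Note that this is not confined to the degenerate configuration you flag at the end ($\gamma(z)=\gamma(u)$); it is the generic situation for this claim, because the shared strip of $u$ and $v$ is forced by the hypothesis.

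The missing idea is that the invariant to preserve is strip membership, not $y$-level or $y$-order. The paper's proof deforms $\Gamma''$ subject only to: (i) its triangle coincides with that of $\Gamma'$, (ii) planarity and $y$-monotonicity are kept, and (iii) every vertex of $G''$ stays in its own strip --- a strictly weaker requirement than yours, which permits reordering the $y$-coordinates of same-strip vertices. Moreover, in the case $\gamma''(z)=\gamma''(u)$ the paper first observes (in the testing step that follows the claim) that strip planarity of $(G'',\gamma'')$ forces \emph{all} of $G''$ into the single strip $\gamma''(u)$, so condition (iii) is vacuous there and one may essentially redraw $G''$ from scratch inside the triangle. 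To repair your argument you would either need to adopt this weaker alignment requirement, or prove a normalization step showing that $\Gamma''$ can first be modified, preserving strip planarity, so that $u$, $v$, $z$ acquire the same relative $y$-order as in $\Gamma'$; neither is supplied by the proposal as written.
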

\begin{proof}
The necessity of the conditions is trivial, given that $G'$ and $G''$ are
subgraphs of $G$, that $\gamma(x)=\gamma'(x)$, for every vertex $x$ of $G'$, and
that $\gamma(x)=\gamma''(x)$, for every vertex $x$ of $G''$.

The sufficiency of the conditions is easily proved as follows. Suppose that
$(G',\gamma')$ and $(G'',\gamma'')$ admit strip planar drawings $\Gamma'$ and
$\Gamma''$, respectively. Scale $\Gamma''$ so that it fits inside the drawing of
cycle $(u,v,z)$ in $\Gamma'$. If $\gamma''(z) = \gamma''(u)$, then suitably
stretch the edges of $G''$ in $\Gamma''$ so that: (i) the drawing of cycle
$(u,v,z)$ in $\Gamma''$ coincides with the drawing of cycle $(u,v,z)$ in
$\Gamma'$ and (ii) no two edges in $\Gamma''$ cross. Then, the drawing $\Gamma$
obtained by gluing $\Gamma'$ and $\Gamma''$ along cycle $(u,v,z)$ is a strip
planar drawing of $(G,\gamma)$. If $\gamma''(z) = \gamma''(u)-1$ (the case in
which $\gamma''(z) = \gamma''(u)+1$ is analogous), then suitably stretch the
edges of $G''$ in $\Gamma''$ so that: (i) the drawing of cycle $(u,v,z)$ in
$\Gamma''$ coincides with the drawing of cycle $(u,v,z)$ in $\Gamma'$, (ii) no
two edges in $\Gamma''$ cross, and (iii) each vertex $x$ of $G''$ such that
$\gamma''(x)=\gamma''(u)$ lies in the strip associated with $\gamma''(u)$ and
each vertex $x$ of $G''$ such that $\gamma''(x)=\gamma''(z)$ lies in the strip
associated with $\gamma''(z)$. Then, the drawing $\Gamma$ obtained by gluing
$\Gamma'$ and $\Gamma''$ along cycle $(u,v,z)$ is a strip planar drawing of
$(G,\gamma)$.
\end{proof}

The strip planarity of $(G'',\gamma'')$ can be tested in linear time as follows.

If $\gamma''(z) = \gamma''(u)$, then $(G'',\gamma'')$ is strip planar if and only if $\gamma''(x)=\gamma''(u)$ for every vertex $x$ of $G''$ (such a condition can clearly be tested in linear time). For the necessity, $3$-cycle $(u,v,z)$ is entirely drawn in $\gamma''(u)$, hence all the internal vertices of $G''$ have to be drawn inside $\gamma''(u)$ as well. For the sufficiency, $G''$ has a plane embedding by assumption, hence any planar $y$-monotone drawing (e.g. a straight-line drawing where no two vertices have the same $y$-coordinate) respecting such an embedding and contained in $\gamma''(u)$ is a strip planar drawing of $(G'',\gamma'')$.

If $\gamma''(z) = \gamma''(u)-1$ (the case in which $\gamma''(z) = \gamma''(u)+1$ is analogous), then we argue as follows: First, a clustered graph $C(G'',T)$ can be defined such that $T$ consists of two clusters $\mu$ and $\nu$, respectively containing every vertex $x$ of $G''$ such that $\gamma''(x)=\gamma''(u)-1$, and every vertex $x$ of $G''$ such that $\gamma''(x)=\gamma''(u)$. We show that $(G'',\gamma'')$ is strip planar if and only if $C(G'',T)$ is $c$-planar. For the necessity, it suffices to observe that a strip planar drawing of $(G'',\gamma'')$ is also a $c$-planar drawing of $C(G'',T)$. For the sufficiency, if $C(G'',T)$ admits a $c$-planar drawing, then it also admits a $c$-planar {\em straight-line} drawing $\Gamma(C)$ in which the regions $R(\mu)$ and $R(\nu)$ representing $\mu$ and $\nu$, respectively, are {\em convex}~\cite{afk-srdcg-11,EadesFLN06}. Assuming w.l.o.g. up to a rotation of $\Gamma(C)$ that $R(\mu)$ and $R(\nu)$ can be separated by a horizontal line, we have that disjoint horizontal strips can be drawn containing $R(\mu)$ and $R(\nu)$. Slightly perturbing the positions of the vertices so that no two of them have the same $y$-coordinate ensures that the the edges are $y$-monotone, thus resulting in a strip planar drawing of $(G'',\gamma'')$. Finally, the $c$-planarity of a clustered graph containing two clusters can be decided in linear time, as independently proved by Biedl et al.~\cite{bkm-dpp-98} and by Hong and Nagamochi~\cite{hn-tpbecp-09}.

In {\em Case 2}, a $3$-cycle $(u,v,z)$ exists that contains no vertices in its
interior in $G$. Then, {\em contract} $(u,v)$, that is, identify $u$ and $v$ to
be the same vertex $w$, whose incident edges are all the edges incident to $u$
and $v$, except for $(u,v)$; the clockwise order of the edges incident to $w$
is: All the edges that used to be incident to $u$ in the same clockwise order
starting at $(u,v)$, and then all the edges that used to be incident to $v$ in
the same clockwise order starting at $(v,u)$. Denote by $G'$ the resulting
graph. Since $G$ is plane, $G'$ is plane; since $G$ contains no $3$-cycle
$(u,v,z)$ that contains vertices in its interior, $G'$ is simple. Let
$\gamma'(x)=\gamma(x)$, for every vertex $x\neq u,v$ in $G$, and let
$\gamma'(w)=\gamma(u)$. We have the following.

\begin{claimx} \label{cl:general-strict-2}
$(G',\gamma')$ is strip planar if and only if $(G,\gamma)$ is strip planar.
\end{claimx}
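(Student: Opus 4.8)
The statement is a biconditional, and I would prove the two implications separately; in each case I transform a strip planar drawing of one instance into a strip planar drawing of the other by a purely local modification confined to a small disk around the contracted edge.

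\textbf{From $(G,\gamma)$ to $(G',\gamma')$.} Suppose $\Gamma$ is a strip planar drawing of $(G,\gamma)$. Since the $3$-cycle $(u,v,z)$ bounds a region with empty interior, its drawing in $\Gamma$ is an empty triangle, and I would contract $(u,v)$ by choosing a disk $D$ containing this triangle and meeting $\Gamma$ only along the edges incident to $u$ and $v$, then continuously sliding $v$ onto $u$ inside $D$ while dragging the edges incident to $v$. Because the interior of $D$ is free of other vertices and edges, no crossing is created; the resulting vertex $w$ sits at the former position of $u$, which lies in the strip $\gamma(u)=\gamma(v)=\gamma'(w)$, so the strip constraint holds. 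The two arcs from $w$ to $z$ inherited from $(u,z)$ and $(v,z)$ now bound an empty region and are merged into the single edge $(w,z)$, keeping $G'$ simple. The only delicate point is $y$-monotonicity: every edge formerly incident to $v$ joins $v$ to some vertex $x$ which, as the instance is proper, lies in $\gamma(v)$ or in an adjacent strip; since $u$ and $v$ occupy the same strip, I reroute the short initial portion of each such edge inside $D$ so that it leaves $w$ with exactly the same ($y$-increasing, $y$-decreasing, or intra-strip) behavior it had at $v$. This produces a strip planar drawing of $(G',\gamma')$ realizing the plane embedding induced by the contraction.

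\textbf{From $(G',\gamma')$ to $(G,\gamma)$.} Conversely, suppose $\Gamma'$ is a strip planar drawing of $(G',\gamma')$. I would recover $(G,\gamma)$ by a vertex split of $w$ into $u$ and $v$. The plane embedding of $G$ prescribes the cyclic orders of edges around $u$ and around $v$, and the contraction rule records, within the cyclic order around $w$, the two consecutive arcs to be reassigned to $u$ and to $v$; in particular the edge $(w,z)$ of $G'$ is split back into $(u,z)$ and $(v,z)$. Working inside a small disk $D'$ around $w$ that meets $\Gamma'$ only along the edge-ends at $w$, I place $u$ and $v$ at two nearby points of $D'$, both inside the strip $\gamma'(w)=\gamma(u)=\gamma(v)$, reattach the two arcs of edges according to the embedding, draw the intra-strip edge $(u,v)$, and redraw $(u,z)$ and $(v,z)$ so that together with $(u,v)$ they bound a small empty triangle. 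As all of this occurs within a single strip, and as $z$ lies in $\gamma(w)$ or in an adjacent strip by properness, each reinserted or rerouted edge leaves $u$ or $v$ with the monotone behavior it had at $w$; no crossing arises because $D'$ was initially empty. The result is a strip planar drawing of $(G,\gamma)$.

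\textbf{Main obstacle.} The genuinely delicate step in both directions is verifying that $y$-monotonicity survives the local surgery, and this is precisely where the two hypotheses enter: the edge $(u,v)$ being intra-strip guarantees that $u$, $v$, and $w$ all lie in a common strip, so edges can be redistributed without forcing an edge to reverse its vertical direction, while the emptiness of the triangle $(u,v,z)$ lets the surgery be confined to a disk free of other vertices and edges, which is what preserves planarity and simplicity. The accompanying bookkeeping of the cyclic edge orders under the split, ensuring that the produced drawing realizes the prescribed plane embedding of $G$, is routine but should be stated explicitly.
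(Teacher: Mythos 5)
Your direction from $(G',\gamma')$ to $(G,\gamma)$ (the vertex split) is essentially the paper's own argument and is fine: inside a tiny disk around $w$ every edge reassigned to $v$ exits the disk at a height on the correct side of $y(w)$, so a non-crossing star from a suitably placed $v$ restores $y$-monotonicity. The gap is in the contraction direction. You confine the surgery to a disk $D$ around the triangle and assert that each edge formerly incident to $v$ can be rerouted inside $D$ ``so that it leaves $w$ with the same behavior it had at $v$.'' Leaving $w$ in the right vertical direction is not the difficulty; the rerouted initial piece must also reconnect, still inside $D$, to the untouched remainder of the edge at the point where that edge crosses $\partial D$, and the height of that crossing point is dictated by $\Gamma$, not by you. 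Say $y(u)<y(v)$ and place $w$ at $u$'s position. A decreasing edge $e=(v,x)$ may cross $\partial D$ at a height just below $y(v)$, hence strictly above $y(w)$; the recombined edge then rises from $w$ to that crossing point before falling to $x$, so it is not $y$-monotone. Placing $w$ at $v$'s position creates the symmetric problem with an increasing edge of $u$ that crosses $\partial D$ just above $y(u)$, and no intermediate placement escapes both constraints simultaneously. Repairing this forces you to redraw such edges well outside $D$ (at least down or up to height $y(w)$), and the region of strip $\gamma(u)$ you would route through may be occupied by other vertices and edges, which must then also move: the modification is inherently non-local within the strip. (A minor additional slip: a disk containing the whole triangle also contains $z$ and meets edges incident to $z$, not only those incident to $u$ and $v$.)

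This is exactly why the paper does not contract locally. It cuts $\Gamma$ along the two horizontal lines bounding strip $\gamma(u)$, records the crossing points $p_1,\dots,p_h$ and $q_1,\dots,q_l$, closes them into a cycle with dummy edges, contracts $(u,v)$ and triangulates, and then invokes the theorem on convex drawings with a prescribed outer polygon~\cite{cyn-lacdp-84} to redraw the entire contents of the strip from scratch with those boundary points pinned; a small perturbation then makes all redrawn edges $y$-monotone, and the parts of $\Gamma$ outside the strip are reused unchanged. To salvage your approach you would need either this kind of global redrawing of the strip's interior, or a normalization lemma stating that $\Gamma$ can first be deformed so that every edge incident to $u$ or $v$ already clears both heights $y(u)$ and $y(v)$ (in its direction of monotonicity) before leaving $D$ — and proving that normalization is essentially as hard as the claim itself.
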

\begin{proof}
For the necessity, consider any strip planar drawing $\Gamma$ of $(G,\gamma)$
(see Fig.~\ref{fig:claim2}(a)). Denote by $p_1,p_2,\dots,p_h$ and by
$q_1,q_2,\dots,q_l$ the left-to-right order of the intersection points of the
edges of $G$ with the lines delimiting strip $\gamma(u)$ from the top and from
the bottom, respectively. Insert dummy vertices at points $p_1,p_2,\dots,p_h$
and $q_1,q_2,\dots,q_l$. Each of such vertices splits an edge of $G$ into two
dummy edges, one lying inside $\gamma(u)$ and one not. Insert dummy edges
$(p_1,q_1)$, $(p_h,q_l)$, $(p_i,p_{i+1})$, for $1\leq i \leq h-1$, and
$(q_i,q_{i+1})$, for $1\leq i \leq l-1$, in $\gamma(u)$.

Contract edge $(u,v)$ into a single vertex $w$. Such a contraction does not
introduce multiple edges, given that no separating $3$-cycle $(u,v,x)$ exists in
$G$, by assumption. Triangulate the internal faces of the resulting plane graph
by inserting dummy vertices and edges, in such a way that no edge connects two
vertices $p_i$ and $p_j$ with $j\geq i+2$, and no edge connects two vertices
$q_i$ and $q_j$ with $j\geq i+2$ (see Fig.~\ref{fig:claim2}(b)). Denote by $T_d$
the resulting internally-triangulated simple plane graph.

Construct a convex straight-line drawing of $T_d$ in which vertices
$p_1,p_2,\dots,p_h$ and $q_1,q_2,\dots,q_l$ have the same positions they have in
$\Gamma$ (see Fig.~\ref{fig:claim2}(c)). Such a drawing always
exists~\cite{cyn-lacdp-84}. Slightly perturb the positions of the vertices
different from $p_1,p_2,\dots,p_h$ and $q_1,q_2,\dots,q_l$, so that no two of
them have the same $y$-coordinate. As a consequence, the edges of $T_d$
different from $(p_i,p_{i+1})$, for $1\leq i \leq h-1$, and $(q_i,q_{i+1})$, for
$1\leq i \leq l-1$, are $y$-monotone curves. Removing the inserted dummy
vertices and edges results in a strip planar drawing of $(G',\gamma')$ (see
Fig.~\ref{fig:claim2}(d)).

\begin{figure}[htb]
\begin{center}
\begin{tabular}{c c}
\mbox{\includegraphics[scale=0.33]{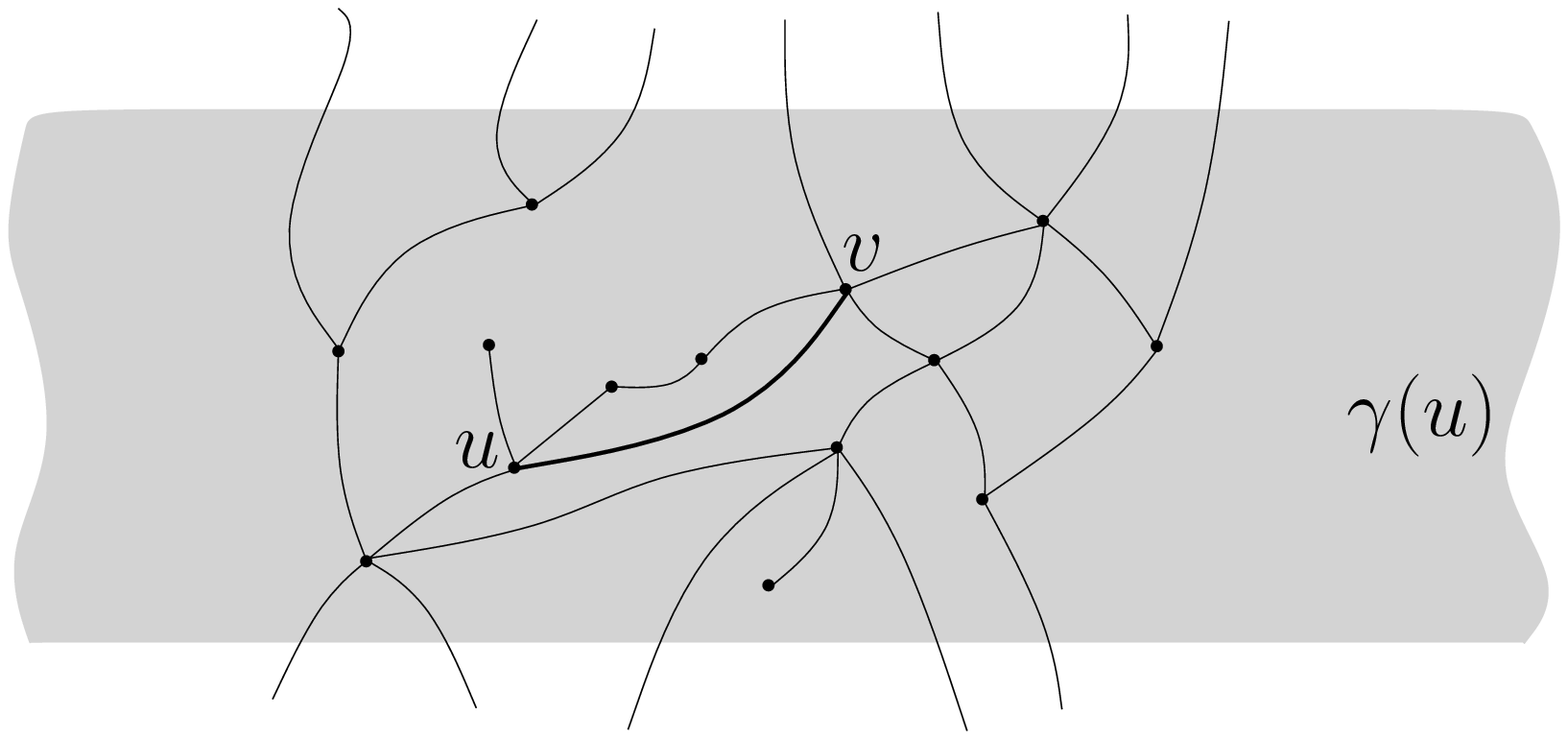}} \hspace{10mm} &
\mbox{\includegraphics[scale=0.33]{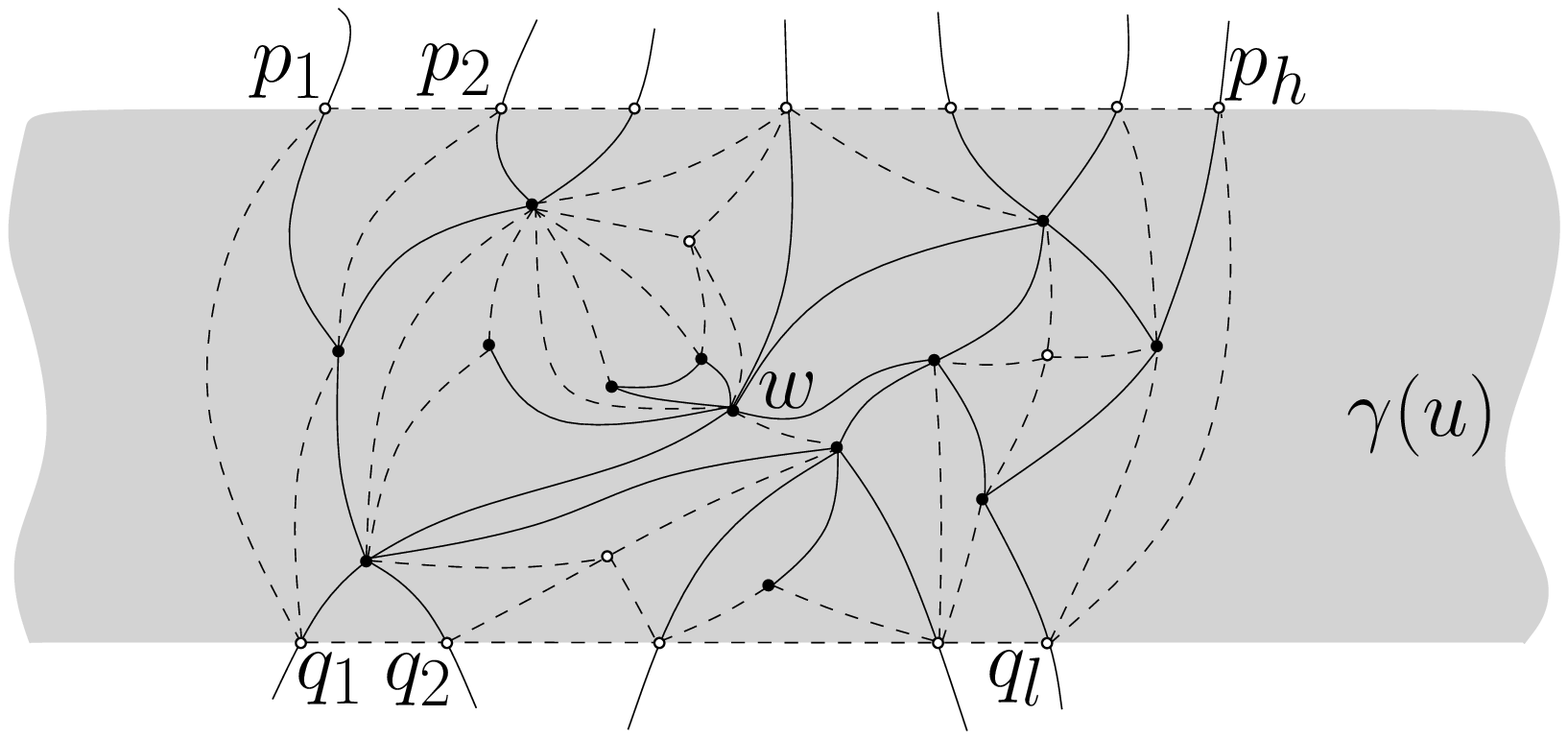}} \\
(a) \hspace{10mm}& (b)\\
\mbox{\includegraphics[scale=0.33]{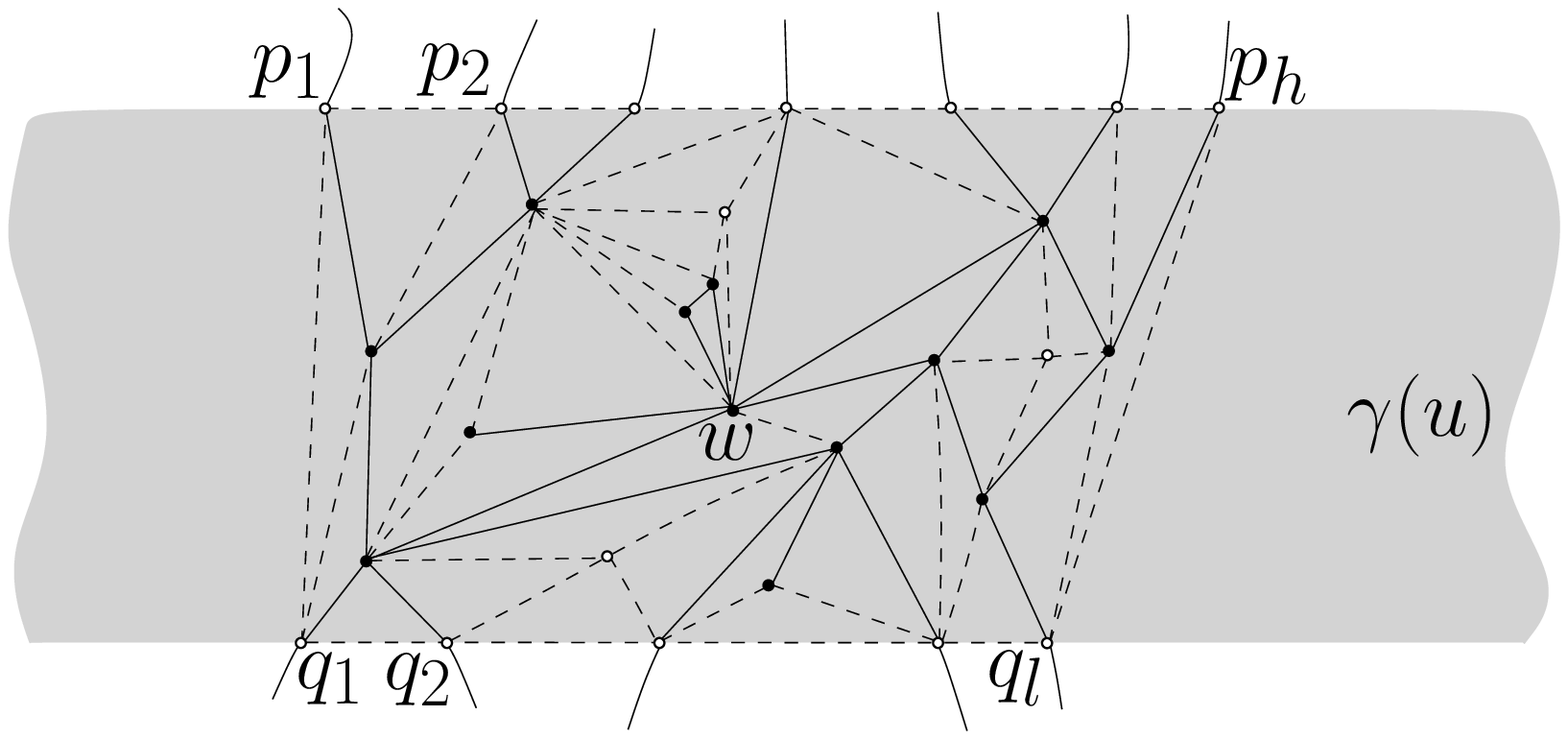}} \hspace{10mm} &
\mbox{\includegraphics[scale=0.33]{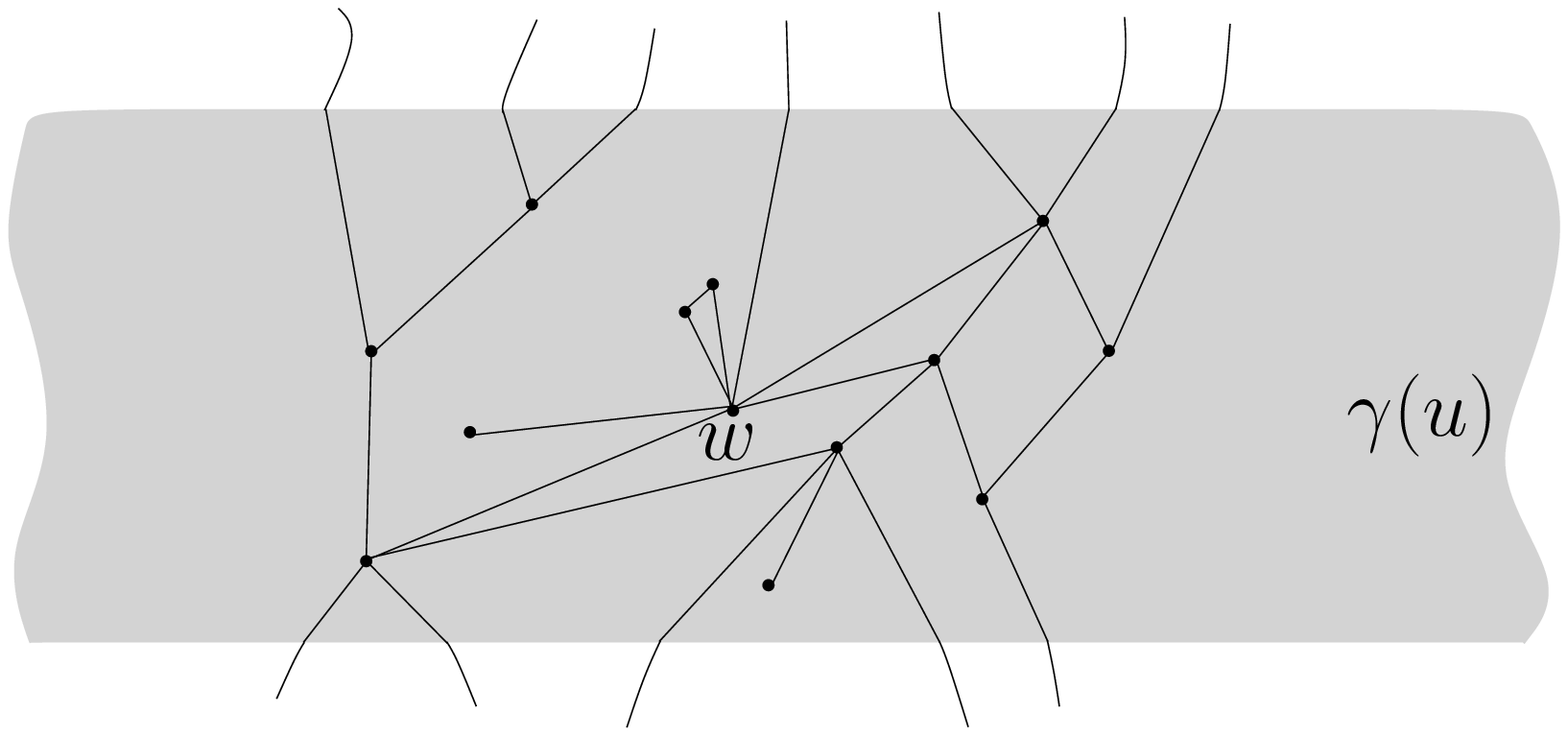}} \\
(c) \hspace{10mm}& (d)
\end{tabular}
\caption{(a) A strip planar drawing $\Gamma$ of $(G,\gamma)$. (b) Modifications
performed on the part of $G$ inside $\gamma(u)$, resulting in a
internally-triangulated simple plane graph $T_d$. (c) A convex straight-line
drawing of $T_d$. (d) A strip planar drawing of $(G',\gamma')$.}
\label{fig:claim2}
\end{center}
\end{figure}

\begin{figure}[htb]
\begin{center}
\begin{tabular}{c c c}
\mbox{\includegraphics[scale=0.6]{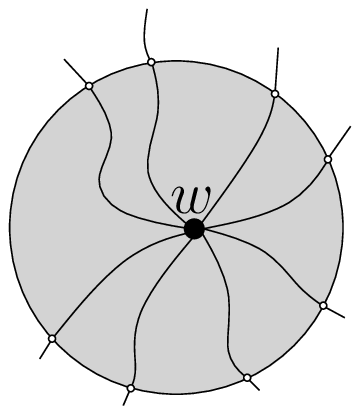}} \hspace{10mm} &
\mbox{\includegraphics[scale=0.6]{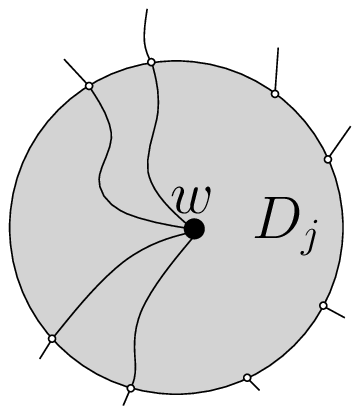}} \hspace{10mm} &
\mbox{\includegraphics[scale=0.6]{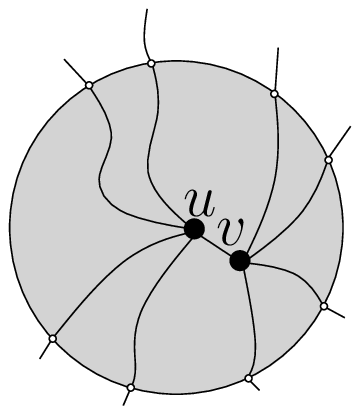}} \\
(a) \hspace{10mm}& (b) \hspace{10mm}& (c)
\end{tabular}
\caption{(a) A disk $D$ containing $w$. (b) Region $D_j$. (c) Drawing edge
$(u,v)$ and the edges incident to $v$ inside $D$.}
\label{fig:disk}
\end{center}
\end{figure}

For the sufficiency, consider any strip planar drawing $\Gamma'$ of
$(G',\gamma')$. Slightly perturb the positions of the vertices in $\Gamma'$, so
that no two vertices have the same $y$-coordinate. Consider a disk $D$
containing $w$, small enough so that it contains no vertex different from $w$,
and it contains no part of an edge that is not incident to $w$ (see
Fig.~\ref{fig:disk}(a)). Remove from the interior of $D$ the parts of the edges
incident to $w$ that correspond to edges incident to $v$. The edges still
incident to $w$ partition $D$ into regions $D_1,D_2,\dots,D_l$. At most one of
such regions, say $D_j$, used to contain edges incident to $w$ corresponding to
edges incident to $v$. In fact, all the edges incident to $w$ corresponding to
edges incident to $v$ appear consecutively around $w$ in $G'$ (see
Fig.~\ref{fig:disk}(b)). Insert a $y$-monotone curve incident to $w$ in $D_j$.
Let $v$ be the end-vertex of such a curve different from $w$. Rename $w$ to $u$.
Draw $y$-monotone curves connecting $v$ with the intersection points of the
border of $D_j$ with the edges incident to $w$ that used to lie inside $D_j$(see
Fig.~\ref{fig:disk}(c)). The resulting drawing is a strip planar drawing of
$(G,\gamma)$.
\end{proof}

Claims~\ref{cl:general-strict-1} and~\ref{cl:general-strict-2} imply
Lemma~\ref{le:general-strict}.
Namely, if $(G,\gamma)$ has no intra-strip edge, there is nothing to prove.
Otherwise, $(G,\gamma)$ has an intra-strip edge $(u,v)$, hence either Case~1 or
Case~2 applies. If Case~2 applies to $(G,\gamma)$, then an instance
$(G',\gamma')$ is obtained in linear time containing one less vertex
than $(G,\gamma)$. By Claim~\ref{cl:general-strict-2}, $(G',\gamma')$ is
equivalent to $(G,\gamma)$. Otherwise, Case~1 applies to $(G,\gamma)$. Then,
either the non-strip planarity of $(G,\gamma)$ is deduced (if $(G'',\gamma'')$
is not strip planar), or an instance $(G',\gamma')$ is obtained containing at
least one less vertex than $(G,\gamma)$  (if $(G'',\gamma'')$ is strip planar). By Claim~\ref{cl:general-strict-1},
$(G',\gamma')$ is equivalent to $(G,\gamma)$. The repetition of such an
argument either leads to conclude in polynomial time that $(G,\gamma)$ is not
strip planar, or leads to construct in polynomial time a strict instance
$(G^*,\gamma^*)$ of strip planarity equivalent to $(G,\gamma)$.

\subsection{From a Strict Instance to a Quasi-Jagged Instance}\label{subse:strict-quasijagged}

In this section we show how to reduce a strict instance of the strip planarity
testing problem to an equivalent quasi-jagged instance. Again, for the sake of simplicity of description, we assume that every considered instance $(G,\gamma)$ is $2$-connected.

\begin{lemma} \label{le:strict-quasijagged}
Let $(G,\gamma)$ be a strict instance of the strip planarity testing problem.
Then, there exists a polynomial-time algorithm that constructs an equivalent
quasi-jagged instance $(G^*,\gamma^*)$ of the strip planarity testing problem.
\end{lemma}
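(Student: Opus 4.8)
The plan is to transform a strict instance into a quasi-jagged one by eliminating, face by face, the configurations that violate the quasi-jagged condition. Recall that $(G,\gamma)$ is quasi-jagged if, for every face $f$ and every pair of \emph{visible} local minimum $u_m$ and local maximum $u_M$, one of the two paths connecting them along $C_f$ is monotone. So the obstruction to overcome is a face $f$ possessing a visible pair $(u_m, u_M)$ where \emph{neither} of the two connecting paths is monotone. My strategy is to locate, for each such bad pair, the sub-path $P$ witnessing visibility (the one along which every internal vertex $u$ satisfies $\gamma(u_m) < \gamma(u) < \gamma(u_M)$), and to insert a new graph into the face $f$ on the side of $P$ so as to ``rigidify'' the drawing, forcing the vertices of $P$ into a monotone staircase while preserving strip planarity.

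First I would formalize the local structure. Fix a face $f$ and a visible pair $(u_m,u_M)$ with witnessing path $P = (u_m = w_0, w_1, \dots, w_t = u_M)$. Since $P$ is not itself monotone, it must contain internal \emph{local extrema} of $f$ that are strictly between $\gamma(u_m)$ and $\gamma(u_M)$; these are precisely the vertices preventing monotonicity. The key insertion idea is to attach, inside $f$ and along $P$, a monotone ``ladder'' or a triangulated strip of dummy vertices and edges, one dummy vertex per intermediate level, connected to the vertices of $P$ in a way that is realizable in any strip planar drawing. Because the dummy subgraph I insert will itself be monotone and will be wedged between $P$ and the interior of $f$, it will certify that, in every strip planar drawing of the augmented instance $(G^*,\gamma^*)$, the vertices of $P$ appear in $\gamma$-increasing order, i.e.\ $P$ becomes effectively monotone for the purpose of the quasi-jagged condition. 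The crucial invariant is that this insertion does not destroy strip planarity: this is where I would argue, as in Claim~\ref{cl:general-strict-1}, that a strip planar drawing of the original instance can be augmented by drawing the inserted ladder in the available region of $f$ bounded by $P$, and conversely that the inserted structure can be removed without affecting strip planarity.

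The plan for the global argument is then a potential-function / induction scheme. I would define a measure of ``non-quasi-jaggedness'' — for instance, the total number of pairs $(f, \{u_m,u_M\})$ that are visible but admit no monotone connecting path, or a finer count of non-monotone internal extrema summed over such witnessing paths — and show that each insertion step strictly decreases this measure while keeping the instance strict and strip-planarity-equivalent. Since each insertion adds only a polynomial number of dummy vertices (at most $O(k)$ per violating path, with $O(n)$ faces and $O(n)$ paths), and the measure is polynomially bounded, the process terminates in polynomial time and yields the desired quasi-jagged instance $(G^*,\gamma^*)$. The two directions of equivalence — that strip planarity is preserved — would each be established by the gluing/perturbation technique already used in Section~\ref{subse:general-strict}: scale and stretch the drawing of the inserted part to fit the region of $f$ it occupies, ensuring $y$-monotonicity of all edges and that each vertex lands in its prescribed strip.

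The main obstacle I anticipate is twofold. First, \emph{correctly specifying the insertion} so that it genuinely forces monotonicity of $P$ in \emph{all} strip planar drawings, not merely in the one at hand: the inserted gadget must constrain the combinatorial embedding tightly enough that the relative vertical order of the vertices of $P$ is determined, yet loosely enough that it never obstructs an existing strip planar drawing from being extended. Getting a gadget that is simultaneously rigid and non-obstructing is the delicate design point. Second, I would need to verify that a single insertion does not inadvertently \emph{create} new visible non-monotone pairs in neighboring faces or in the new faces carved out by the gadget; controlling this is what makes the monotone-ladder choice attractive, since a monotone inserted path introduces only monotone sub-paths and hence no new violations. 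I expect the bulk of the technical work to lie in carefully handling these interactions and in proving the strip-planarity equivalence of the augmented instance, rather than in the termination count, which should follow routinely once the potential function is in place.
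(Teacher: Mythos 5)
Your high-level skeleton (iterate over violating visible pairs, insert something monotone into the face, drive a potential function to zero, and prove strip-planarity equivalence at each step) matches the paper's, but the core of your construction rests on a category error. You aim to ``rigidify'' the witnessing path $P$ so that its vertices ``appear in $\gamma$-increasing order'' and $P$ becomes ``effectively monotone.'' Monotonicity of a path is a purely combinatorial property of the instance --- $\gamma(u_i)=\gamma(u_{i-1})+1$ along the path --- and no inserted gadget can change whether $P$ satisfies it, since the $\gamma$-labels of $P$'s vertices are fixed. The correct move, which the paper makes, is not to constrain $P$ at all but to insert a \emph{new} monotone path joining $u_m$ and $u_M$ directly, attached only at its two endpoints; in the subdivided face the pair $(u_m,u_M)$ is then connected by a monotone path by construction, and a counting argument shows no new violating triples arise. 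Your ladder ``connected to the vertices of $P$, one dummy vertex per intermediate level'' is both unnecessary and harmful: attaching to internal vertices of $P$ (whose levels oscillate) creates edges that may be non-proper and over-constrains the embedding, threatening exactly the non-obstruction property you flag as delicate. You correctly sense that designing a gadget that is ``simultaneously rigid and non-obstructing'' is hard --- it is hard because rigidity is the wrong goal.

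The second gap is the equivalence proof itself. You propose to handle it with the gluing/scaling technique of Claim~\ref{cl:general-strict-1}, drawing the ladder ``in the available region of $f$ bounded by $P$.'' But there the inserted subgraph sits inside an empty triangle; here, no such free region need exist: other parts of $G$, reachable along the rest of $C_f$, can be drawn so as to intermingle arbitrarily with $P$ and block every $y$-monotone curve from $u_m$ to $u_M$ inside $f$. This is precisely the crux of Claim~\ref{cl:strict-quasijagged}: one must first \emph{modify} the given strip planar drawing --- the paper performs a horizontal shrinkage of the portion of the drawing near $P$, then carefully reconnects the severed edge segments as a planar $y$-monotone matching --- to clear a corridor in which the new monotone path can be routed. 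Without an argument of this kind, the forward direction of your equivalence (that inserting the gadget preserves strip planarity) does not go through. The termination and polynomial-time bookkeeping in your last part are fine and essentially identical to the paper's.
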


Consider any face $f$ of $G$ containing two visible local minimum and maximum
$u_m$ and $u_M$, respectively, such that no path connecting $u_m$ and $u_M$ in
$C_f$ is monotone. Insert a monotone path connecting $u_m$ and
$u_M$ inside $f$. Denote by $(G^+,\gamma^+)$ the resulting instance of the strip
planarity testing problem. We have the following claim:

\begin{claimx} \label{cl:strict-quasijagged}
$(G^+,\gamma^+)$ is strip planar if and only if $(G,\gamma)$ is strip planar.
\end{claimx}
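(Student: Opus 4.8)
The plan is to prove the two directions separately, with the forward implication being essentially free and the reverse implication carrying all the weight. For \emph{necessity}, note that $G$ is a subgraph of $G^+$ (the operation only inserts a path inside $f$) and that $\gamma$ is the restriction of $\gamma^+$ to $V(G)$, since the inserted internal vertices receive fresh $\gamma^+$-values while the old vertices keep theirs. Hence, given any strip planar drawing of $(G^+,\gamma^+)$, simply deleting the inserted monotone path yields a planar drawing of $G$ with all edges still $y$-monotone and every vertex still in its prescribed strip; that is a strip planar drawing of $(G,\gamma)$.

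For \emph{sufficiency} I would start from a strip planar drawing $\Gamma$ of $(G,\gamma)$ and reduce the whole task to a single geometric question: \emph{can one draw a $y$-monotone curve $c$ from $u_m$ to $u_M$ whose interior lies in the open face $f$?} If so, I would finish as follows. Since $c$ is $y$-monotone and runs from $u_m$ (in strip $\gamma(u_m)$) up to $u_M$ (in strip $\gamma(u_M)$), it crosses each strip-separating horizontal line $b_i$ with $\gamma(u_m)\leq i \leq \gamma(u_M)-1$ exactly once. Placing a subdivision vertex $w_i$ on the portion of $c$ inside strip $\gamma(u_m)+i$, for each $1\leq i\leq \gamma(u_M)-\gamma(u_m)-1$, turns $c$ into the required monotone path: every new vertex sits in the correct strip, every new edge is a sub-arc of a $y$-monotone curve and hence $y$-monotone, and because $c$ lies in $f$ the insertion splits $f$ into two faces without disturbing the rest of $\Gamma$, so planarity and all previous strip constraints are preserved. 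Thus $(G^+,\gamma^+)$ is strip planar.

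The hard part will be establishing the existence of the $y$-monotone curve $c$ in $f$, and this is exactly where the \emph{visibility} hypothesis on $u_m$ and $u_M$ must be used. My approach would be a sweep/continuity argument along the visible path $P$. Because $P$ joins $u_m$ to $u_M$ and every internal vertex of $P$ has $\gamma$-value strictly between $\gamma(u_m)$ and $\gamma(u_M)$, the whole of $P$ (except its endpoints) is drawn inside the open horizontal slab $y(u_m)<y<y(u_M)$; moreover $u_m$ is a local minimum and $u_M$ a local maximum of $C_f$, so near these two vertices the interior of $f$ opens upward and downward, respectively. I would first argue that for every height $y\in(y(u_m),y(u_M))$ the horizontal line $\ell_y$ meets $f$: since $f$ is connected and contains points at heights arbitrarily close to $y(u_m)$ and to $y(u_M)$, continuity of the height function forces $f\cap\ell_y\neq\emptyset$. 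I would then show that the strip of $f$ lying immediately to the interior side of $P$ furnishes an unobstructed vertical ``channel'': because $P$ never leaves the open slab, no part of $C_f$ coming from the $P$-side can cut across this channel within the slab, so the intervals $f\cap\ell_y$ selected along $P$ vary continuously and can be threaded into a single $y$-monotone curve from $u_m$ to $u_M$. The crux of the rigor is ruling out a horizontal ``pinch'' of $f$ inside the slab that would block monotone passage; the visibility condition is precisely what excludes this, since without it the other boundary path could excurse above $u_M$ or below $u_m$ and pinch the face.

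Finally, I would check the boundary cases. If $f$ is the outer face the same argument applies verbatim, the ``interior of $f$'' now being the unbounded region and the curve $c$ being routed there. The standing assumption that $G$ is $2$-connected guarantees that $C_f$ is a simple cycle consisting of exactly the two paths $P$ and $Q$ between $u_m$ and $u_M$, which is what makes the slab/channel analysis clean; I would remark that this is the only place $2$-connectivity is used here. Combining the trivial necessity with the sweep-based sufficiency yields the claimed equivalence.
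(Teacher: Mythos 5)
Your necessity direction is fine and matches the paper. The gap is in sufficiency: you assert that in the \emph{given} drawing $\Gamma$ a $y$-monotone curve from $u_m$ to $u_M$ can always be threaded through the part of $f$ adjacent to $P$, with visibility ruling out any ``pinch''. This is false. Remember that the claim is invoked precisely when \emph{neither} boundary path of $f$ is monotone, so $P$ itself has internal local minima and maxima: the ``channel'' of $f$ immediately to the interior side of $P$ is not vertical but wiggles up and down with $P$, and a curve hugging $P$ is therefore not $y$-monotone. To shortcut the wiggles the curve must leave the vicinity of $P$, and there the other boundary path $Q$ can obstruct it. Concretely, take a $10$-cycle with $P: u_m(1),a(2),b(3),c(2),d(3),u_M(4)$ and $Q$ a parallel copy drawn so that $f$ is a thin corridor between them. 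Then $u_m,u_M$ are visible, neither path is monotone, the instance is strip planar, yet at a height in the middle of strip $2$ the line $\ell_y$ meets $f$ in three separate tiny intervals, and any $y$-monotone arc from $u_m$ to $u_M$ inside $f$ would have to switch between these interval families at a height where they merge --- which only happens above $b$ or below $c$, forcing non-monotonicity. Visibility constrains $P$, not $Q$, and does not exclude such pinches; your closing remark attributes to it a role it does not play.

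The missing idea, and the heart of the paper's proof, is that $\Gamma$ must first be \emph{modified}. The paper fixes horizontal lines $l'_m$ and $l'_M$ just above $y(u_m)$ and just below $y(u_M)$ (chosen so that all internal vertices of $P$ and all vertices to the right of $P$ in the relevant region lie strictly between them, and so that $P$ crosses each of them exactly once), and then horizontally shrinks everything lying to the right of $P$ between these two lines toward a far vertical line $l''$. This empties a region $R'$ adjacent to $P$ spanning the whole height range $[y'_m,y'_M]$, which necessarily belongs to $f$; the curve is then routed along the first edge of $P$ up to $l'_m$, $y$-monotonically through the emptied $R'$ to $l'_M$, and along the last edge of $P$ up to $u_M$. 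The shrinkage disconnects the edges crossing $l'_m$ and $l'_M$, and the proof must repair them: since the slabs between $l(u_m)$ and $l'_m$ (and symmetrically above) contain no vertices, the severed pieces form a planar $y$-monotone matching whose left-to-right order is preserved by the shrinkage, so they can be reconnected without crossings. Your proposal contains neither the modification step nor this repair, and without them the sufficiency direction does not go through.
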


\remove{
\begin{proofsketch}
The necessity is trivial. For the sufficiency, consider any strip planar
drawing $\Gamma$ of $(G,\gamma)$. Denote by $P$ the path connecting $u_m$ and
$u_M$ along $C_f$ and such that $\gamma(u_m)<\gamma(v)<\gamma(u_M)$ holds for
every internal vertex $v$ of $P$. Because of the existence of some parts of the
graph that ``intermingle'' with $P$, it might not be possible to draw a
$y$-monotone curve inside $f$ connecting $u_m$ and $u_M$ in $\Gamma$. Thus, a
part of $\Gamma$ has to be horizontally shrunk, so that it moves ``far away''
from $P$, thus allowing for the monotone path connecting $u_m$ and $u_M$ to be
drawn as a $y$-monotone curve inside $f$. This results in a strip planar drawing
of $(G^+,\gamma^+)$.
\end{proofsketch}
}
\begin{proof}
One direction of the equivalence is trivial, namely if $(G^+,\gamma^+)$ is strip
planar, then $(G,\gamma)$ is strip planar, since $G$ is a subgraph of $G^+$ and
$\gamma(v)=\gamma^+(v)$ for every vertex $v$ in $G$.

We prove the other direction. Consider a strip planar drawing $\Gamma$ of
$(G,\gamma)$. Slightly perturb the positions of the vertices in $\Gamma$ so that
no two of them have the same $y$-coordinate. Denote by $P$ and $Q$ the two paths
connecting $u_m$ and $u_M$ along $C_f$. Since $u_m$ and $u_M$ are visible local
minimum and maximum for $f$, it holds $\gamma(u_m)<\gamma(v)<\gamma(u_M)$ for
every internal vertex $v$ of $P$, or it holds
$\gamma(u_m)<\gamma(v)<\gamma(u_M)$ for every internal vertex $v$ of $Q$. Assume
that $\gamma(u_m)<\gamma(v)<\gamma(u_M)$ holds for every internal vertex $v$ of
$P$, the other case being analogous. We also assume w.l.o.g. that face $f$ is to
the right of $P$ when traversing such a path from $u_m$ to $u_M$. We modify
$\Gamma$, if necessary, while maintaining its strip planarity so that a
$y$-monotone curve $\cal C$ connecting $u_m$ and $u_M$ can be drawn inside $f$.

\begin{figure}[htb]
\begin{center}
\begin{tabular}{c c}
\mbox{\includegraphics[scale=0.5]{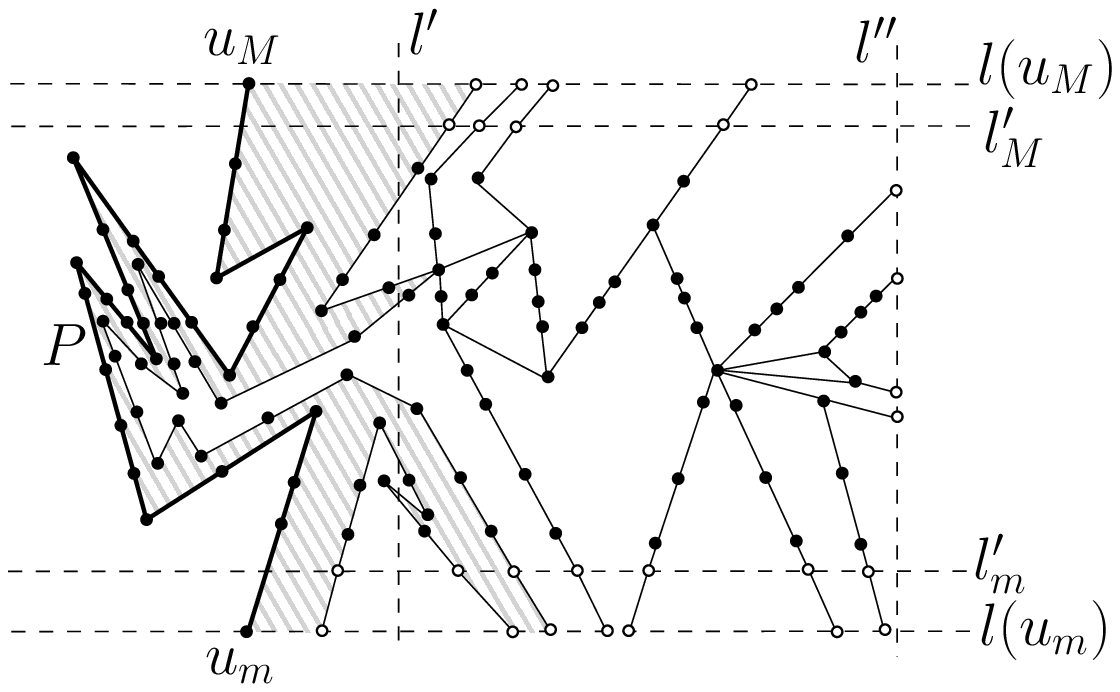}} \hspace{5mm} &
\mbox{\includegraphics[scale=0.5]{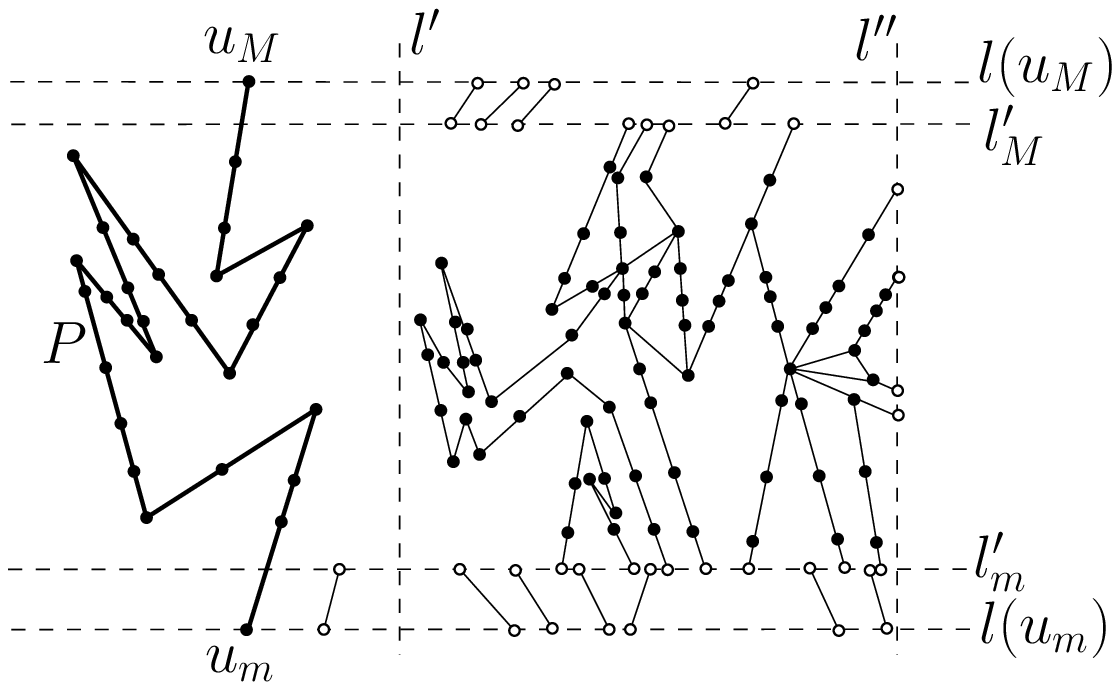}} \\
(a) \hspace{5mm}& (b)\\
\mbox{\includegraphics[scale=0.5]{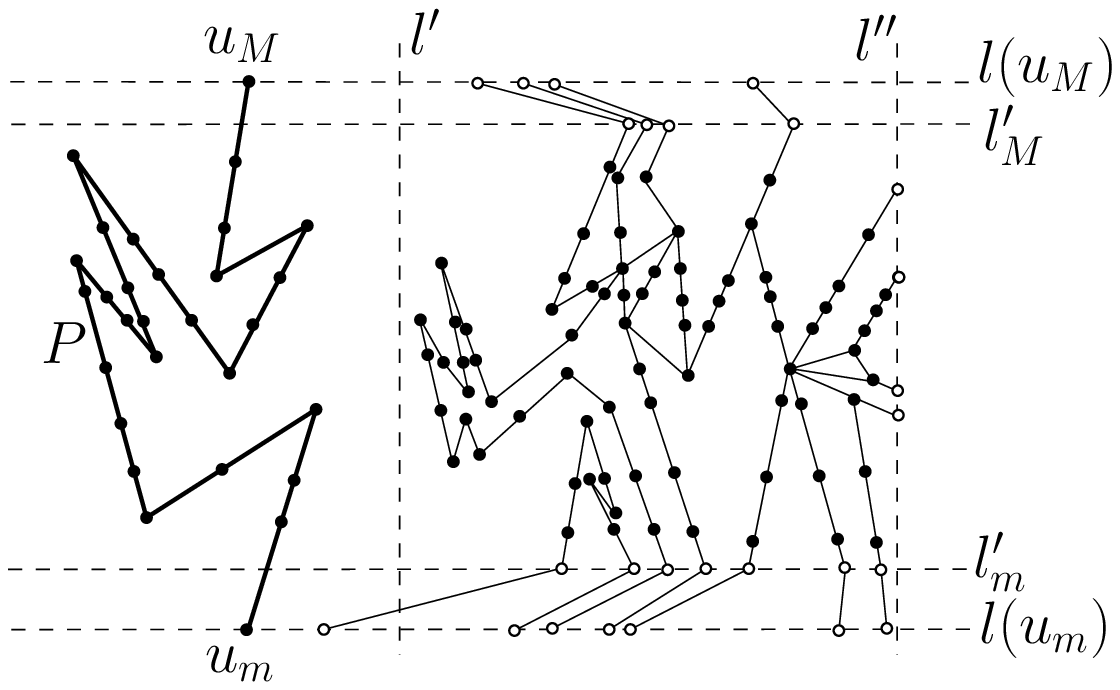}} \hspace{5mm} &
\mbox{\includegraphics[scale=0.5]{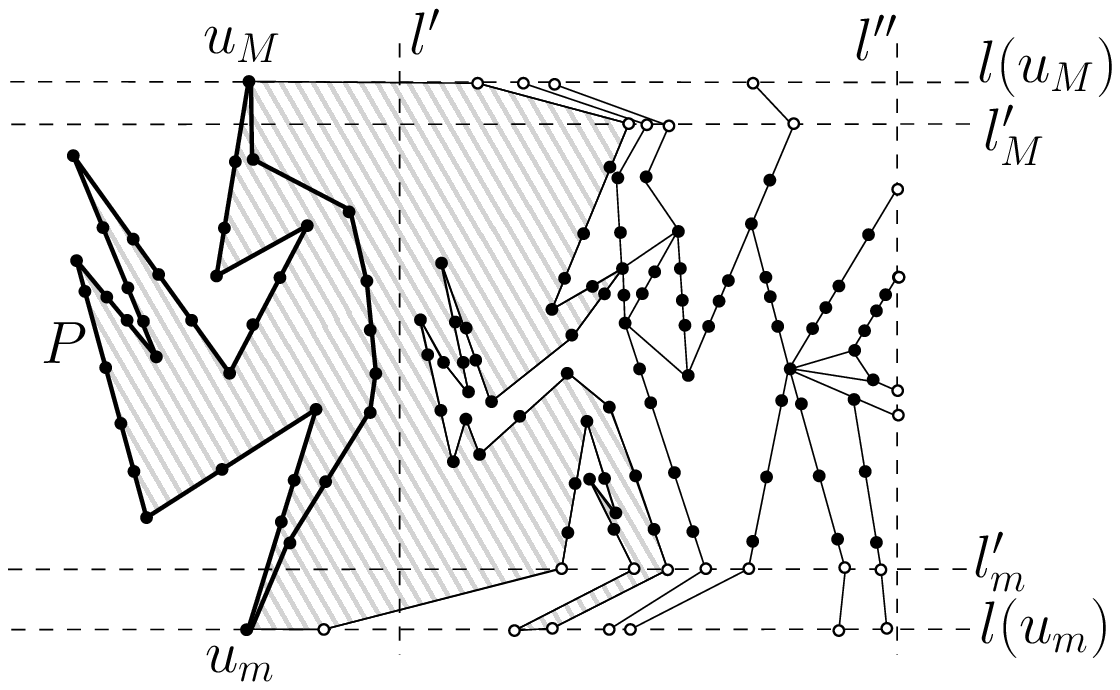}} \\
(c) \hspace{5mm}& (d)
\end{tabular}
\caption{(a) Drawing $\Gamma$ inside region $R$. The part of face $f$ inside $R$
is colored gray. Path $P$ is represented by a thick line. Intersection points of
edges with lines $l''$, $l(u_m)$, $l(u_M)$, $l'_m$, and $l'_M$ are represented
by white circles. (b) Drawing $\Gamma$ inside region $R$ after the shrinkage.
(c) Reconnecting parts of edges that have been disconnected by the shrinkage.
(d) Drawing of a monotone path connecting $u_m$ and $u_M$ inside $f$.}
\label{fig:claim3}
\end{center}
\end{figure}

We introduce some notation. Refer to Fig.~\ref{fig:claim3}(a). Let $l(u_m)$ and
$l(u_M)$ be horizontal lines through $u_m$ and $u_M$, respectively. Let $l'$ and
$l''$ be vertical lines entirely lying to the right of $P$, with $l''$ to the
right of $l'$. Denote by $D$ the distance between $l'$ and $l''$. Denote by $R$
the bounded region of the plane delimited by $P$, by $l(u_m)$, by $l(u_M)$, and
by $l''$. Denote by $y_M$ the maximum between the $y$-coordinates of the
vertices in the interior of $R$ and the $y$-coordinates of the internal vertices
of $P$. Analogously, denote by $y_m$ the minimum between the $y$-coordinates of
the vertices in the interior of $R$ and the $y$-coordinates of the internal
vertices of $P$. Denote by $y'_M$ and $y'_m$ values such that $y_M<y'_M<y(u_M)$
and $y(u_m)<y'_m<y_m$. Let $l'_m$ and $l'_M$ be the horizontal lines $y=y'_m$
and $y=y'_M$, respectively. Finally, we define some regions inside $R$. Let $R'$
be the bounded region of the plane delimited by $P$, by $l'_m$, by $l'_M$, and
by $l'$; let $R''$ be the bounded region of the plane delimited by $P$, by
$l'_m$, by $l'_M$, and by $l''$; let $R'''$ be the bounded region of the plane
delimited by $l'$, by $l'_m$, by $l'_M$, and by $l''$; let $R_B$ be the bounded
region of the plane delimited by $P$, by $l'_m$, by $l(u_m)$, and by $l''$; and
let $R_A$ be the bounded region of the plane delimited by $P$, by $l'_M$, by
$l(u_M)$, and by $l''$. We are going to modify $\Gamma$ in such a way that no
vertex and no part of an edge lies in the interior of $R'$. The part of $\Gamma$
outside $R$ is not modified in the process.

We perform an horizontal shrinkage of the part of $\Gamma$ that lies in the
interior of $R''$ (the vertices of $P$ stay still). This is done in such a way
that every intersection point of an edge with $l''$ keeps the same
$x$-coordinate, and the distance between $l''$ and every point in the part of
$\Gamma$ that used to lie inside $R''$ becomes strictly smaller than $D$. See
Fig.~\ref{fig:claim3}(b). Hence, the part of $\Gamma$ that used to lie inside
$R''$ is now entirely contained in $R'''$. However, some edges of $G$ (namely
those that used to intersect $l'_m$ and $l'_M$) are now disconnected; e.g., if
an edge of $G$ used to intersect $l'_m$, now such an edge contains a line
segment inside $R'''$, which has been shrunk, and a line segment inside $R_B$,
whose drawing has not been modified by the shrinkage. However, by construction
$R_B$ does not contain any vertex in its interior. Hence, the line segments that
lie in $R_B$ form in $\Gamma$ a planar $y$-monotone matching between a set $A$
of points on $l'_m$ and a set $B$ of points on $l(u_m)$. As a consequence of the
shrinkage, the position of the points in $A$ has been modified, however their
relative order on $l'_m$ has not been modified. Thus, we can delete the line
segments in $R_B$ and reconnect the points in $B$ with the new positions of the
points in $A$ on $l'_m$ so that each edge is $y$-monotone and no two edges
intersect. See Fig.~\ref{fig:claim3}(c). After performing an analogous
modification in $R_A$, we obtain a planar $y$-monotone drawing $\Gamma'$ of $G$
in which no vertex and no part of an edge lies in the interior of $R'$. Since no
vertex changed its $y$-coordinate and every edge is $y$-monotone, $\Gamma'$ is a
strip planar drawing of $(G,\gamma)$.

Finally, we draw a $y$-monotone curve $\cal C$ connecting $u_m$ and $u_M$. This
is done as follows. See Fig.~\ref{fig:claim3}(d). Starting from $u_m$, follow
path $P$, slightly to the right of it, until reaching line $l'_m$; continue
drawing $\cal C$ as a $y$-monotone curve in the interior of $R'$ intersecting
$l'_M$ in a point arbitrarily close to path $P$; finally, follow path $P$ until
reaching $u_M$. Place each vertex $x$ of the monotone path connecting $u_m$ and
$u_M$ on $\cal C$ at a suitable $y$-coordinate, so that $x$ lies in the strip
$\gamma(x)$. We thus obtained a strip planar drawing of $(G^+,\gamma^+)$, which
concludes the proof.
\end{proof}

Claim~\ref{cl:strict-quasijagged} implies Lemma~\ref{le:strict-quasijagged}, as
proved in the following.

First, the repetition of the above described augmentation leads to a
quasi-jagged instance $(G^*,\gamma^*)$. In fact, whenever the augmentation is performed, the number of triples
$(v_m,v_M,g)$ such that vertices $v_m$ and $v_M$ are visible local minimum
and maximum for face $g$, respectively, and such that both paths connecting
$v_m$ and $v_M$ along $C_f$ are not monotone decreases by $1$, thus eventually
the number of such triples is zero, and the instance is quasi-jagged.

Second, $(G^*,\gamma^*)$ can be constructed from $(G,\gamma)$ in polynomial
time. Namely, the number of pairs of visible local minima and maxima
for a face $g$ of $G$ is polynomial in the number of vertices of $g$. Hence, the
number of  triples $(v_m,v_M,g)$ such that vertices $v_m$ and $v_M$ are
visible local minimum and maximum for face $g$, over all faces of $G$, is
polynomial in $n$. Since a linear number of vertices
are introduced in $G$ whenever the augmentation described above is performed, it
follows that the the construction of $(G^*,\gamma^*)$ from $(G,\gamma)$ can be
accomplished in polynomial time.

Third, $(G^*,\gamma^*)$ is an instance of the strip planarity testing problem
that is equivalent to $(G,\gamma)$. This directly comes from repeated
applications of Claim~\ref{cl:strict-quasijagged}.

\subsection{From a Quasi-Jagged Instance to a Jagged
Instance}\label{subse:quasijagged-jagged}

In this section we show how to reduce a quasi-jagged instance of the strip
planarity testing problem to an equivalent jagged instance. Again, for the sake of simplicity of description, we assume that every considered instance $(G,\gamma)$ is $2$-connected.

\begin{lemma} \label{le:quasijagged-jagged}
Let $(G,\gamma)$ be a quasi-jagged instance of the strip planarity testing
problem. Then, there exists a polynomial-time algorithm that constructs an
equivalent jagged instance $(G^*,\gamma^*)$ of the strip planarity testing
problem.
\end{lemma}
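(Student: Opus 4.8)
The plan is to mimic the structure of the previous two reductions: repeatedly locate a local witness of non-jaggedness, insert a monotone path inside the corresponding face to destroy it, and prove that each insertion preserves strip planarity. Concretely, suppose $(G,\gamma)$ is not jagged, so some face $f$ has a local maximum $u_M$ that is not a global maximum (the case of a local minimum that is not a global minimum is symmetric, e.g.\ by replacing $\gamma$ with $k+1-\gamma$, which reflects the $y$-direction). Let $t=\gamma(u_M)$ and let $w$ be a suitably chosen global maximum of $f$, with $\gamma(w)=g_{\max}>t$. I would insert inside $f$ a monotone path $u_M=x_t,x_{t+1},\dots,x_{g_{\max}}=w$ with $\gamma(x_i)=i$, calling the result $(G^+,\gamma^+)$; since every edge of the path joins consecutive strips, $(G^+,\gamma^+)$ is again proper and strict.

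A short bookkeeping argument shows this makes progress and can be iterated in polynomial time. The inserted path splits $f$ into two faces; every internal vertex $x_i$ of the path has one neighbor below and one above in each of these faces, so it is never a local extremum, while $u_M$ acquires the higher neighbor $x_{t+1}$ and ceases to be a local maximum. No vertex of $C_f$ other than $u_M$ and $w$ changes its neighbors, and a global extremum of $f$ remains a global extremum of whichever subface contains it; hence the potential $\Phi=\sum_{f}\#\{\text{local extrema of }f\text{ that are not global extrema of }f\}$ strictly decreases (it may even drop by more than one, since a non-global extremum can become global once isolated in a smaller subface). As $\Phi$ is polynomially bounded and each path has length at most $k$, iterating the augmentation terminates after polynomially many steps and yields a jagged instance $(G^*,\gamma^*)$.

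The heart of the proof is the equivalence claim: $(G^+,\gamma^+)$ is strip planar if and only if $(G,\gamma)$ is. Necessity is immediate because $G$ is a subgraph of $G^+$ and $\gamma=\gamma^+$ on $V(G)$. For sufficiency I would start from a strip planar drawing $\Gamma$ of $(G,\gamma)$ and show that, after a strip-planarity-preserving modification, a $y$-monotone curve $\mathcal C$ from $u_M$ up to $w$ can be drawn inside $f$; placing $x_{t+1},\dots,x_{g_{\max}-1}$ on $\mathcal C$ at their prescribed heights then yields a drawing of $(G^+,\gamma^+)$. Here I would exploit quasi-jaggedness through the following observation, which I expect to be the workhorse: if a visible local minimum $m$ and local maximum $b$ of a face are not simultaneously a global minimum and a global maximum, then the boundary arc between them whose interior lies strictly in $(\gamma(m),\gamma(b))$ must be \emph{monotone}. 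Indeed, quasi-jaggedness forces one of the two arcs to be monotone, and a monotone arc keeps all its values in $[\gamma(m),\gamma(b)]$; were it the \emph{other} arc, the whole cycle $C_f$ would lie in $[\gamma(m),\gamma(b)]$, making $m$ and $b$ global. This controls how the boundary of $f$ interleaves around $u_M$ and lets me select $w$ so that the region of $f$ lying above $u_M$ is bounded by monotone arcs rising to $w$.

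The main obstacle, and the genuinely new difficulty compared with Claim~\ref{cl:strict-quasijagged}, is precisely this routing step. In the strict-to-quasijagged reduction the curve was drawn \emph{alongside} a monotone boundary path, and the only work was to shrink an adjacent region to create room. Here no such guide exists: $u_M$ is a local maximum, so both boundary edges at $u_M$ go downward and the curve $\mathcal C$ must climb into the open interior of $f$ toward $w$ without hugging the boundary. I would handle this by a shrinkage of the parts of $f$ that obstruct a vertical channel from $u_M$ to $w$ --- analogous to the shrinkage of region $R''$ in Claim~\ref{cl:strict-quasijagged}, but now clearing the territory above the peak --- using the monotone-arc structure above to guarantee that such a channel can always be opened while keeping every edge $y$-monotone and every vertex in its strip. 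Making this channel argument fully rigorous (in particular, choosing $w$ and proving the obstructing region is shrinkable without introducing crossings) is where I expect the bulk of the technical effort to lie.
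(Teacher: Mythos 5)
Your augmentation step is not sound: inserting the monotone path \emph{at} the offending local maximum $u_M$ and climbing to a global maximum $w$ can destroy strip planarity outright, so the equivalence you need to prove is false for this reduction. The obstruction is bimodality: in any planar drawing with $y$-monotone edges, the edges leaving a vertex towards higher strips must be consecutive in the rotation at that vertex (they all exit through the upper half of a small disk around it), and likewise for the edges towards lower strips. Take a face $f$ in which $u_M$ is a local but not global maximum and in which $u_M$ has a third incident edge leading to a higher strip inside some \emph{other} face. In the rotation at $u_M$, the two $C_f$-edges (both going down) enclose the sector of $f$, and the third edge (going up) lies in the complementary sector; your new path adds a second upward edge inside the $f$-sector, producing the cyclic pattern down, up, down, up at $u_M$, so $(G^+,\gamma^+)$ admits no planar $y$-monotone drawing at all. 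A concrete strict, $2$-connected, quasi-jagged, strip planar instance realizing this: let $C_f$ be the cycle $A(1),B(5),c(2),d(4),e(2),F(5)$, with monotone paths between cyclically consecutive listed vertices, bounding an internal face $f$, and attach in the exterior of $C_f$ a monotone path $d,x(5),T(6)$ together with edges $(T,B)$ and $(T,F)$. Here $d$ is a local but not global maximum of $f$, the instance is strip planar and quasi-jagged, yet adding a monotone path from $d$ to $B$ or to $F$ inside $f$ yields a non-bimodal (hence non-strip-planar) instance; no choice of $w$ helps, and no amount of shrinking or rerouting can fix a violation that is already combinatorial.

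This is precisely the difficulty the paper's proof is organized around: it never attaches the new path at the offending extremum. For a non-global local minimum $v$ it distinguishes whether the nearest global extrema encountered along $C_f$ from $v$ are of opposite kinds (its Case~1) or both maxima (its Case~2). In Case~1 the new monotone path is attached at a vertex $v''$ lying on a monotone stretch of the boundary (so no new switch is created there), and in Case~2 a whole scaffold $A(u_M,v_M,f)$ is inserted that attaches only at the two global maxima flanking the offending valley and circumvents it, manufacturing new global minima at the right levels so that the lowest local minimum of the valley becomes global in its new face. The paper also proves that every face created by an augmentation is again quasi-jagged; your iteration would need the same invariant (your own sufficiency argument invokes quasi-jaggedness of the current instance) but does not establish it. Your termination potential is essentially the paper's count $n(G)$, and your observation that quasi-jaggedness forces the visibility-certifying arc to be the monotone one is correct, but neither repairs the augmentation itself.
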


Consider any face $f$ of $G$ that contains some local minimum or maximum which is not a global minimum or maximum for $f$, respectively. Assume that $f$ contains a local minimum $v$ which is not a global minimum for $f$. The case in which $f$ contains a local maximum which is not a global maximum for $f$ can be discussed analogously. Denote by $u$ (denote by $z$) the first global minimum or maximum for $f$ that is encountered when walking along $C_f$ starting at $v$ while keeping $f$ to the left (resp. to the right).

We distinguish two cases, namely the case in which $u$ is a global minimum for $f$ and $z$ is a global maximum for $f$ (Case 1), and the case in which $u$ and $z$ are both global maxima for $f$ (Case 2). The case in which $u$ is a global maximum for $f$ and $z$ is a global minimum for $f$, and the case in which $u$ and $z$ are both global minima for $f$ can be discussed symmetrically.

\begin{figure}[tb]
\begin{center}
\begin{tabular}{c c}
\mbox{\includegraphics[scale=0.5]{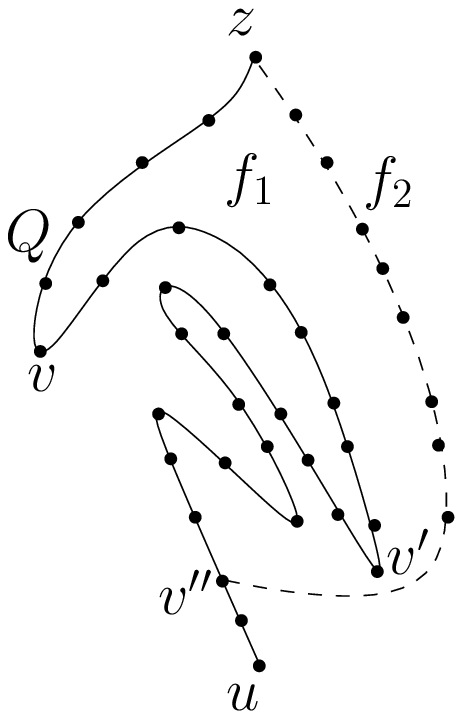}} \hspace{10mm} &
\mbox{\includegraphics[scale=0.5]{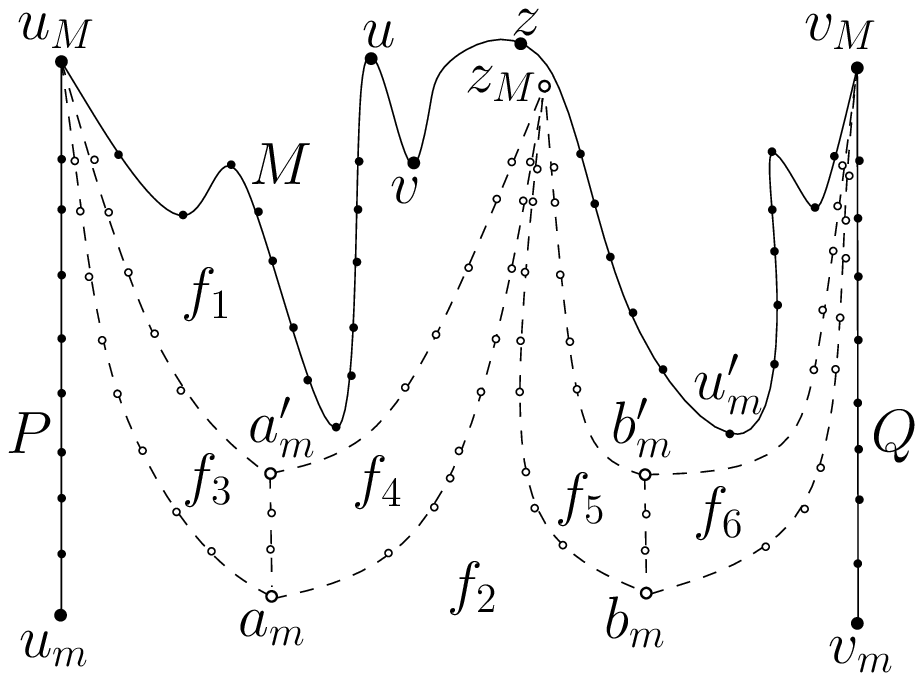}} \\
(a) \hspace{10mm}& (b)
\end{tabular}
\caption{Augmentation of $(G,\gamma)$ inside a face $f$ in: (a) Case 1 and (b) Case 2.}
\label{fig:augmentation}
\end{center}
\vspace{-5mm}
\end{figure}

In {\em Case 1}, denote by $Q$ the path connecting $u$ and $z$ in $C_f$ and containing $v$. Consider the internal vertex $v'$ of $Q$ that is a local minimum for $f$ and that is such that $\gamma(v')=\min_{u'}\gamma(u')$ among all the internal vertices $u'$ of $Q$ that are local minima for $f$. Traverse $Q$ starting from $u$, until a vertex $v''$ is found with $\gamma(v'') = \gamma(v')$. Notice that, the subpath of $Q$ between $u$ and $v''$ is monotone. Insert a monotone path connecting $v''$ and $z$ inside $f$. See Fig.~\ref{fig:augmentation}(a). Denote by $(G^+,\gamma^+)$ the resulting instance of the strip planarity testing problem. We have the following claim:

\begin{claimx} \label{cl:quasijagged-jagged1}
Suppose that Case 1 is applied to a quasi-jagged instance $(G,\gamma)$ to construct an instance $(G^+,\gamma^+)$. Then, $(G^+,\gamma^+)$ is strip planar if and only if $(G,\gamma)$ is strip planar. Also, $(G^+,\gamma^+)$ is quasi-jagged.
\end{claimx}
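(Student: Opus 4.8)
The plan is to establish the two assertions separately. For the strip-planarity equivalence, one direction is immediate: since $G$ is a subgraph of $G^+$ and $\gamma^+$ extends $\gamma$, any strip planar drawing of $(G^+,\gamma^+)$ restricts to one of $(G,\gamma)$; the same remark, together with the fact that the inserted path $M$ is monotone and hence has no intra-strip edge, also shows that $(G^+,\gamma^+)$ is strict. The substantial direction is, starting from an arbitrary strip planar drawing $\Gamma$ of $(G,\gamma)$, to draw $M$ from $v''$ to $z$ inside $f$. Here I would reuse the shrinkage machinery from the proof of Claim~\ref{cl:strict-quasijagged} almost verbatim, after recording the structural facts: the subpath of $Q$ from $u$ to $v''$ is monotone; every vertex of the subpath $P^\ast$ of $Q$ from $v''$ to $z$ satisfies $\gamma(v'')\le\gamma(\cdot)\le\gamma(z)$ (the lower bound because $v'$ is a deepest internal local minimum of $Q$ and $\gamma(v'')=\gamma(v')$, the upper bound because $z$ is the first global maximum met from $v$); and $\gamma(z)$ is attained on $P^\ast$ only at $z$ (a global maximum in the $v''$--$v$ stretch would be met before $u$ when walking from $v$ toward $u$).

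The single point where Claim~\ref{cl:strict-quasijagged} does not apply literally is that the lower bound on $P^\ast$ is \emph{not} strict: an internal vertex (e.g.\ $v'$) may lie in strip $\gamma(v'')$, whereas the shrinkage needs $v''$ to sit strictly below the vertices of the region it clears. I would remove this obstruction first: inside a strip the $y$-coordinates may be perturbed freely without destroying $y$-monotonicity of the edges, and every vertex of $P^\ast$ in strip $\gamma(v'')$ other than $v''$ is a local minimum of $f$ (both its neighbors lie in strip $\gamma(v'')+1$), so I can lower $y(v'')$ until it is the unique minimum $y$-coordinate in strip $\gamma(v'')$ among the relevant vertices. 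After this perturbation $v''$ and $z$ play exactly the roles of $u_m$ and $u_M$ in Claim~\ref{cl:strict-quasijagged}, a vertical corridor to the right of $P^\ast$ can be freed, and a $y$-monotone curve realizing $M$ can be routed inside $f$, producing a strip planar drawing of $(G^+,\gamma^+)$.

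For quasi-jaggedness I would first note that $M$ affects only $f$, splitting it into a face $f_1$ bounded by $M$ and $P^\ast$ and a face $f_2$ bounded by $M$, the monotone subpath from $u$ to $v''$, and the path $Q'$ from $u$ to $z$ not containing $v$; every other face, hence its visible extrema, is untouched. In $f_2$ no new local extremum is created, because $v''$ becomes an ascending vertex and the concatenation of the subpath $u$--$v''$ with $M$ is a single monotone path from the global minimum $u$ to the global maximum $z$; in $f_1$ the only new local extremum is $v''$, now a local minimum. For every visible pair $(a_m,a_M)$ of a face $g\in\{f_1,f_2\}$ whose endpoints are both extrema already present in $f$---i.e.\ every pair except those of the form $(v'',\cdot)$ in $f_1$---I would reduce to the quasi-jaggedness of $f$: if such a pair is visible in $g$ through the shared boundary it is visible in $f$ as well, and the complementary $C_f$-path joining its endpoints is non-monotone (it is forced to rise to the global maximum $z$ or descend to the global minimum $u$ in its interior), so quasi-jaggedness of $f$ makes the shared-boundary path monotone; the pairs $(v'',z)$ of $f_1$ and $(u,z)$ of $f_2$ are realized monotone by $M$ and by the monotone side of $f_2$.

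The heart of the argument, and the step I expect to be the main obstacle, is the remaining family of pairs of $f_1$ of the form $(v'',a_M)$, where $v''$ is a local minimum that did \emph{not} exist in $f$, so that quasi-jaggedness of $f$ cannot be invoked directly. Here I would exploit the monotone subpath from $u$ to $v''$: if such a pair is visible in $f_1$ through the subpath of $P^\ast$ from $v''$ to $a_M$, then prepending $u$--$v''$ shows that $(u,a_M)$ is visible in $f$ (all intermediate levels stay strictly between $\gamma(u)$ and $\gamma(a_M)$, the lower bound because $u$ is a \emph{global} minimum); the complementary $C_f$-path through $z$ is non-monotone, so quasi-jaggedness of $f$ forces $u$--$a_M$ to be monotone, whence so is its suffix $v''$--$a_M$. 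I anticipate the delicate bookkeeping to be the degenerate cases in which several vertices share the extremal level $\gamma(u)$ or $\gamma(z)$, so that ``the complementary path is non-monotone'' must be argued rather than asserted; these I would dispatch by selecting, among equally deep global minima or maxima, the ones actually traversed by the path in question, using that the edges of $M$ keep the instance strict.
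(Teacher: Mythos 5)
Your proof is correct and, for most of its components, coincides with the paper's: the trivial direction; the observation that quasi-jaggedness forces the opposite boundary path $P$ of $f$ to be monotone, so that $u$ and $z$ are the \emph{unique} global minimum and maximum of $f$ (which, as you suspected, is exactly what dissolves your worry about several vertices sharing an extremal level); the preprocessing that makes $v''$ strictly lower than every other level-$\gamma(v')$ local minimum of $Q$; and the quasi-jaggedness argument, which is verbatim the paper's reduction (a visible pair $(v'',u'')$ of $f_1$ with no monotone connection would, after prepending the monotone $u$--$v''$ stretch, give a visible pair $(u,u'')$ of $f$ violating quasi-jaggedness of $(G,\gamma)$). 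Two points of divergence are worth recording. First, for the preprocessing the paper does not lower $v''$ but raises the offending local minima: for each local minimum $u'$ of $Q$ with $\gamma(u')=\gamma(v')$ it vertically shrinks the region bounded by the two $Q$-edges at $u'$ and the top line of strip $\gamma(u')$ until $u'$ lies above $v''$. This achieves the same configuration as your perturbation but deforms well-delimited regions rather than displacing a single vertex together with all its incident edges, which is easier to defend as planarity-preserving; you should adopt it. Second, and more substantively, the routing of the new path is genuinely different. You re-run the corridor-clearing shrinkage of Claim~\ref{cl:strict-quasijagged} alongside $P^*$; the paper instead exploits the monotonicity of $P$ directly: after the preprocessing it draws from $v''$ a nearly horizontal, slightly rising segment across $f$ until it meets $P$, and then rides alongside $P$ up to $z$, with no clearing at all --- at the price of a separate construction when $f$ is the outer face, where the curve must detour around the entire drawing before climbing back to $z$. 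Your corridor approach is heavier but uniform over the internal/outer-face distinction; the paper's is shorter but case-split. Both are valid.
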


\remove{
\begin{proofsketch}
The necessity is trivial. For the sufficiency, consider any strip planar
drawing $\Gamma$ of $(G,\gamma)$. First, $\Gamma$ is modified so that $v''$ has
$y$-coordinate smaller than every local minimum of $Q$ different from $u$. Then,
a $y$-monotone curve can be drawn inside $f$ connecting $v''$ and $z$, thus
resulting in a strip planar drawing of $(G^+,\gamma^+)$.
\end{proofsketch}
}
\begin{proof}
We prove that $(G^+,\gamma^+)$ is strip planar if and only if $(G,\gamma)$ is
strip planar.

One direction of the equivalence is trivial, namely if $(G^+,\gamma^+)$ is strip
planar, then $(G,\gamma)$ is strip planar, since $G$ is a subgraph of $G^+$ and
$\gamma(x)=\gamma^+(x)$, for every vertex $x$ in $G$.

We prove the other direction. Consider a strip planar drawing $\Gamma$ of
$(G,\gamma)$. Observe that, since $u$ and $z$ are consecutive global minimum and
maximum for $f$, they are visible. Since $Q$ is not monotone, by assumption, and
since $(G,\gamma)$ is quasi-jagged, it follows that the path $P$ connecting $u$
and $z$ in $C_f$ and not containing $v$ is monotone. Hence, $u$ and $z$ are the
only global minimum and maximum for $f$, respectively.

For every local minimum $u'$ in $Q$ such that $\gamma(u')=\gamma(v')$ (including
$v'$), define $R(u')$ to be the bounded region delimited by the two edges
incident to $u'$ in $Q$, and by the horizontal line delimiting $\gamma(u')$ from
the top; vertically shrink $R(u')$ and the part of $\Gamma$ inside it so that
the $y$-coordinate of $u'$ is larger than the one of $v''$. Observe that such a
modification does not alter the strip planarity of $\Gamma$.

Next, we distinguish two cases.

In the first case, $f$ is an internal face of $G$. See Fig.~\ref{fig:claim4}(a).
We draw a $y$-monotone curve $\cal C$ connecting $v''$ and $z$ as follows. Draw
a line segment of $\cal C$ inside $f$ starting at $v''$ and slightly increasing
in the $y$-direction, until reaching path $P$. Then, follow such a path to reach
$z$. Place each vertex $x$ of the monotone path connecting $v''$ and $z$ on
$\cal C$ at a suitable $y$-coordinate, so that $x$ lies in the strip
$\gamma(x)$.

\begin{figure}[htb]
\begin{center}
\begin{tabular}{c c}
\mbox{\includegraphics[scale=0.5]{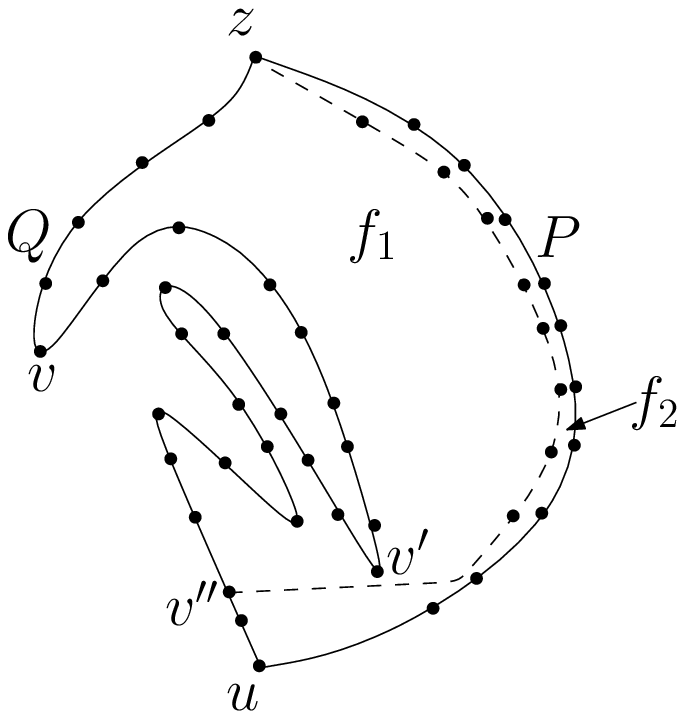}} \hspace{10mm} &
\mbox{\includegraphics[scale=0.5]{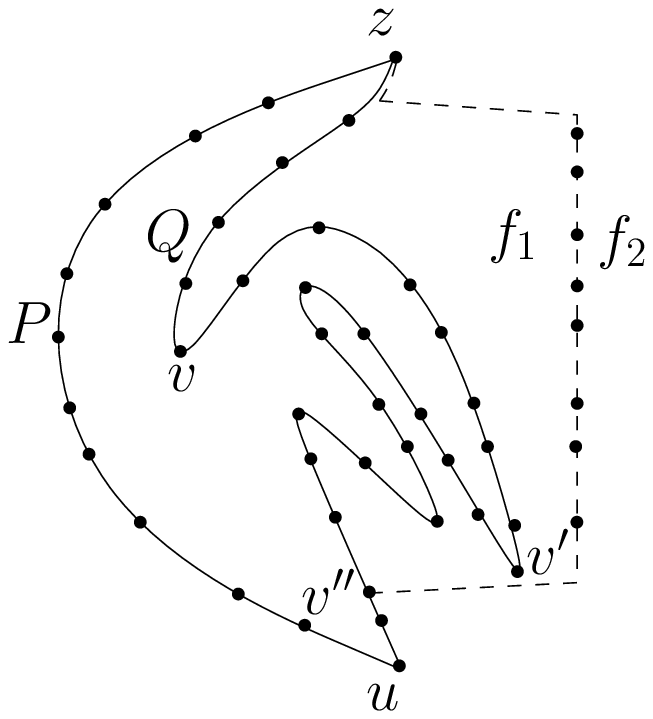}} \\
(a) \hspace{10mm}& (b)
\end{tabular}
\caption{Inserting a monotone path connecting $v''$ and $z$ inside $f$ if: (a)
$f$ is an internal face, and (b) $f$ is the outer face.}
\label{fig:claim4}
\end{center}
\end{figure}

In the second case, $f$ is the outer face of $G$. See Fig.~\ref{fig:claim4}(b).
Then, we draw a $y$-monotone curve $\cal C$ connecting $v''$ and $z$ as follows.
Draw a line segment of $\cal C$ inside $f$ starting at $v''$ and slightly
increasing in the $y$-direction, until reaching an $x$-coordinate which is
larger than the maximum $x$-coordinate of any point of $\Gamma$. Then, continue
drawing $\cal C$ as a vertical line segment, until a point is reached whose
$y$-coordinate is smaller than the $y$-coordinate of $z$ and larger than the one
of every vertex of $Q$ different from $z$ (recall that $z$ is the only global
maximum for $f$). Then, continue drawing $\cal C$  slightly increasing in the
$y$-direction and decreasing in the $x$-direction, until the edge of $Q$
incident to $z$ is reached. Then, follow such an edge to reach $z$. Place each
vertex $x$ of the monotone path connecting $v''$ and $z$ on $\cal C$ at a
suitable $y$-coordinate, so that $x$ lies in the strip $\gamma(x)$.

It remains to prove that $(G^+,\gamma^+)$ is quasi-jagged. Every face $g\neq f$
of $G$ has not been altered by the augmentation inside $f$, hence, for any two
visible local minimum $u_m$ and local maximum $u_M$ for $g$, one of the two
paths connecting $u_m$ and $u_M$ in $g$ is monotone. Denote by $f_1$ and $f_2$
the two faces into which $f$ is split by the insertion of the monotone path
connecting $v''$ and $z$, where $f_1$ is the face delimited by such a monotone
path and by the subpath of $Q$ between $v''$ and $z$. Face $f_2$ is delimited by
two monotone paths, hence the only pair of visible local minimum and local
maximum for $f_2$ is connected by a monotone path in $C_{f_2}$. Face $f_1$, on
the other hand, contains a local minimum that is not a local minimum for $f$,
namely $v''$. However, $v''$ is connected with $z$ by a monotone path in
$C_{f_1}$; also, the existence of a local maximum $u''$ for $f$ such that $v''$
and $u''$ are visible and are not connected by a monotone path in $C_{f_1}$
would imply that $u$ and $u''$ are a pair of visible local minimum and local
maximum for $f$ that is not connected by a monotone path in $C_{f}$, which
contradicts the fact that $(G,\gamma)$ is quasi-jagged.
\end{proof}

In {\em Case 2}, denote by $M$ a {\em maximal} path that is part of $C_f$, whose end-vertices are two global maxima $u_M$ and $v_M$ for $f$, that contains $v$ in its interior, and that does not contain any global minimum in its interior. By the assumptions of Case 2, such a path exists. Assume, w.l.o.g., that face $f$ is to the right of $M$ when walking along $M$ starting at $u_M$ towards $v_M$. Possibly $u_M=u$ and/or $v_M=z$. Let $u_m$ ($v_m$) be the global minimum for $f$ such that $u_m$ and $u_M$ (resp. $v_m$ and $v_M$) are consecutive global minimum and maximum for $f$. Possibly, $u_m=v_m$. Denote by $P$ the path connecting $u_m$ and $u_M$ along $C_f$ and not containing $v$. Also, denote by $Q$ the path connecting $v_m$ and $v_M$ along $C_f$ and not containing $v$. Since $M$ contains a local minimum among its internal vertices, and since $(G,\gamma)$ is quasi-jagged, it follows that $P$ and $Q$ are monotone.

Insert the plane graph $A(u_M,v_M,f)$ depicted by white circles and dashed lines in Fig.~\ref{fig:augmentation}(b) inside $f$. Consider a local minimum $u'_m \in M$ for $f$ such that $\gamma(u'_m)=\min_{v'_m}\gamma(v'_m)$ among the local minima $v'_m$ for $f$ in $M$. Set $\gamma(z_M)=\gamma(u_M)$, set $\gamma(a_m) = \gamma(b_m) = \gamma(u_m)$, and set $\gamma(a'_m) = \gamma(b'_m) = \gamma(u'_m)$. The dashed lines connecting $a_m$ and $u_M$, connecting $a'_m$ and $u_M$, connecting $a_m$ and $z_M$, connecting $a'_m$ and $z_M$, connecting $b_m$ and $z_M$, connecting $b'_m$ and $z_M$, connecting $b_m$ and $v_M$, connecting $b'_m$ and $v_M$, connecting $a_m$ and $a'_m$, and connecting $b_m$ and $b'_m$ represent monotone paths. Denote by $(G^+,\gamma^+)$ the resulting instance of the strip planarity testing problem. We have the following claim:

\begin{claimx} \label{cl:quasijagged-jagged2}
Suppose that Case 2 is applied to a quasi-jagged instance $(G,\gamma)$ to construct an instance $(G^+,\gamma^+)$. Then, $(G^+,\gamma^+)$ is strip planar if and only if $(G,\gamma)$ is strip planar. Also, $(G^+,\gamma^+)$ is quasi-jagged.
\end{claimx}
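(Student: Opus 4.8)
The plan is to follow the two-part structure of the proof of Claim~\ref{cl:quasijagged-jagged1}, establishing first the equivalence of strip planarity and then the quasi-jaggedness of $(G^+,\gamma^+)$. As in every previous claim, one direction of the equivalence is immediate: since $G$ is a subgraph of $G^+$ and $\gamma^+$ restricts to $\gamma$ on $V(G)$, the restriction to $G$ of any strip planar drawing of $(G^+,\gamma^+)$ is a strip planar drawing of $(G,\gamma)$. The content lies in the converse.

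For the converse I would start from a strip planar drawing $\Gamma$ of $(G,\gamma)$, perturbed so that no two vertices share a $y$-coordinate, and realize the gadget $A(u_M,v_M,f)$ of Fig.~\ref{fig:augmentation}(b) inside the (empty) face $f$ as a set of pairwise non-crossing $y$-monotone paths, placing $z_M$ in the strip $\gamma(u_M)$, the vertices $a_m,b_m$ in the strip $\gamma(u_m)$, and $a'_m,b'_m$ in the strip $\gamma(u'_m)$. The room needed for this comes from the monotonicity of $P$ and $Q$, which descend from $u_M$ and $v_M$ down to the global minima $u_m$ and $v_m$: the left fan on $\{u_M,z_M,a_m,a'_m\}$ can be routed hugging $P$ and the left portion of $M$, with $a_m$ anchored near the global-minimum level and $a'_m$ at the level $\gamma(u'_m)$ of the deepest interior minimum of $M$, and symmetrically for the right fan on $\{z_M,v_M,b_m,b'_m\}$ along $Q$. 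Exactly as in Case~1, before routing I would apply a vertical shrinkage in the spirit of the regions $R(u')$ used in Claim~\ref{cl:quasijagged-jagged1} (see also Fig.~\ref{fig:claim3}) so as to control the $y$-coordinates of the interior minima of $M$ and open up the space into which the paths of $A$ descend; and I would treat the case in which $f$ is the outer face separately, routing the curves that must ``go around'' $\Gamma$ as in Fig.~\ref{fig:claim4}(b).

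I expect the main obstacle to be the simultaneous realization of all ten monotone paths of $A$ together with the correct placement of $z_M$. Since $z_M$ must lie in the top strip $\gamma(u_M)$ while combinatorially separating the two fans, and since $f$ reaches the top strip only near $u_M$ and $v_M$, one cannot place $z_M$ symmetrically in the middle; instead one has to exploit that the combinatorial embedding of $A$ does not prescribe a symmetric drawing, placing $z_M$ near one endpoint of $M$ and letting the opposite fan span across $f$, its two chains descending below the valleys of $M$ (which is possible precisely because $a_m,b_m$ lie below level $\gamma(u'_m)$ and $u'_m$ is the deepest interior minimum). Verifying that this routing reproduces the prescribed embedding of $G^+$, and that the shrinkage leaves enough horizontal room for the descending chains to enclose the valleys without crossing $M$, is the delicate step.

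Finally, for quasi-jaggedness I would inspect each face into which $A$ subdivides $f$. The only faces bounded solely by paths of $A$ are the four triangles $u_M a_m a'_m$, $z_M a_m a'_m$, $z_M b_m b'_m$, $v_M b_m b'_m$ (the left and right fans are each a copy of $K_4$ minus the edge joining their two top vertices); each such face has a single local minimum and a single local maximum joined by a monotone path, hence is jagged and in particular quasi-jagged. Every remaining new face is incident to a portion of the original boundary $C_f$ (to $M$, $P$, $Q$, or the path between $u_m$ and $v_m$), and for these I would argue by contradiction exactly as at the end of the proof of Claim~\ref{cl:quasijagged-jagged1}: the paths of $A$ join the enclosed portion of $M$ monotonically to $z_M$, $u_M$, or $v_M$, so a visible pair of a local minimum and a local maximum in such a face that were not joined by a monotone path would induce a visible pair in $f$ not joined by any monotone path in $C_f$, contradicting the assumption that $(G,\gamma)$ is quasi-jagged.
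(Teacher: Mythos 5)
Your overall architecture matches the paper's (trivial necessity, drawing $A(u_M,v_M,f)$ into a suitably modified strip planar drawing of $(G,\gamma)$, then a face-by-face check of quasi-jaggedness), and your quasi-jaggedness argument is essentially the paper's and is fine. But there is a genuine gap in the core geometric step. The modification of $\Gamma$ you propose --- a vertical shrinkage of regions $R(u')$ as in Claim~\ref{cl:quasijagged-jagged1} --- is the wrong tool here: that operation only lowers the $y$-coordinates of interior minima, whereas the real obstruction is that the portion of the plane through which the ten paths of $A(u_M,v_M,f)$ must pass can be occupied by other parts of $\Gamma$ (subgraphs of $G$ reaching into that region from the rest of $C_f$, and the wiggling of $M$, $P$, $Q$ themselves). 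The paper clears this by a \emph{horizontal} compression: in one configuration it redraws everything lying in the region $R$ bounded by $P$, $M$, $Q$, and $l_m$ into an arbitrarily thin sliver $R_Q$ hugging $Q$, reconnecting the severed edge pieces across buffer strips as planar $y$-monotone matchings. Your sketch never frees this space; the phrase ``inside the (empty) face $f$'' suggests you are treating the geometric realization of $f$ in $\Gamma$ as if it already admitted the required $y$-monotone curves, which is precisely what must be arranged.

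More importantly, you miss the case distinction on which the routing of $A(u_M,v_M,f)$ depends: whether in $\Gamma$ the $y$-monotone curve representing $P$ lies to the left or to the right of the one representing $Q$. Your routing --- the fan opposite $z_M$ ``descending below the valleys of $M$'' --- only describes the first configuration. In the second, $M$ must travel from the top of $P$ around to the top of $Q$ on the far side, possibly wiggling to the right of $P$ and to the left of $Q$, and the region below its valleys is not reachable by $y$-monotone curves from both $u_M$ and $v_M$ inside $f$. There the paper routes the paths to $z_M$ outward past vertical lines to the right of $P$ and to the left of $Q$, up into a thin horizontal buffer just above the highest vertex $w_M$ of $M$, and places $z_M$ on a short vertical segment attached to $w_M$; this requires two further horizontal shrinkages (one near $P$, one near $Q$) plus a redrawing of the $y$-monotone matching crossing the top buffer. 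Without identifying this second configuration and its distinct routing and surgery, the sufficiency direction is incomplete.
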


\remove{
\begin{proofsketch}
The necessity is trivial. For the sufficiency, consider any strip planar
drawing $\Gamma$ of $(G,\gamma)$. If $P$ is to the left of $Q$, then a region
$R$ is defined as the region delimited by $P$, by $M$, by $Q$, and by the
horizontal line delimiting $\gamma(u_m)$ from above. Then, the part of $\Gamma$
that lies inside $R$ is redrawn so that it lies inside a region $R_Q\subset R$
arbitrarily close to $Q$. Such a redrawing ``frees'' space for the drawing of
$A(u_M,v_M,f)$ inside $f$, which results in a strip planar drawing of
$(G^+,\gamma^+)$. If $P$ is to the right of $Q$, then $M$ might ``wiggle'' to
the right of $P$ and to the left of $Q$. Thus, we first horizontally shrink a
part of $\Gamma$ that ``intermingles'' with $P$ and $Q$, and we then draw
$A(u_M,v_M,f)$ using its four monotone paths connecting global minima with
global maxima in order to ``circumvent'' $M$. This results in a strip planar
drawing of $(G^+,\gamma^+)$.
\end{proofsketch}
}
\begin{proof}
One direction of the equivalence is trivial, namely if $(G^+,\gamma^+)$ is strip
planar, then $(G,\gamma)$ is strip planar, since $G$ is a subgraph of $G^+$ and
$\gamma(v)=\gamma^+(v)$ for every vertex $v$ in $G$.

We prove the other direction. Consider a strip planar drawing $\Gamma$ of
$(G,\gamma)$. Slightly perturb the position of the vertices in $\Gamma$ so that
no two of them have the same $y$-coordinate. Since $(G,\gamma)$ is quasi-jagged,
the path $P$ connecting $u_m$ and $u_M$ along $C_f$ and not containing $v_M$ is
monotone, and the path $Q$ connecting $v_m$ and $v_M$ along $C_f$ and not
containing $u_M$ is monotone.  We assume w.l.o.g. that face $f$ is to the right
of $P$ when traversing such a path from $u_m$ to $u_M$. Denote by $l_M$ the line
delimiting strip $\gamma(u_M)$ from below; also, denote by $l_m$ the line
delimiting strip $\gamma(u_m)$ from above.

The proof distinguishes two cases. In the first case (Case 2A), the intersection
of $P$ with $l_M$ lies to the left of the intersection of $Q$ with $l_M$. In the
second case (Case 2B), the intersection of $P$ with $l_M$ lies to the right of
the intersection of $Q$ with $l_M$. Since $P$ and $Q$ are represented in
$\Gamma$ by $y$-monotone curves that do not intersect each other, in Case 2A the
intersection of $P$ with $l_m$ lies to the left of the intersection of $Q$ with
$l_m$, while in Case 2B the intersection of $P$ with $l_m$ lies to the right of
the intersection of $Q$ with $l_m$. In both cases, we modify $\Gamma$, if
necessary, while maintaining its strip planarity so that plane graph
$A(u_M,v_M,f)$ can be planarly drawn in $f$ with $y$-monotone edges.

We first discuss Case 2A.

\begin{figure}[htb]
\begin{center}
\mbox{\includegraphics[width=0.8\textwidth]{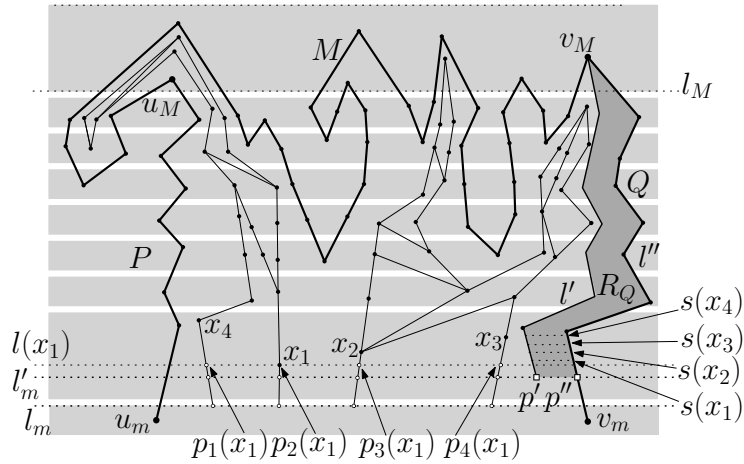}}
\caption{Illustration for the proof of Claim~\ref{cl:quasijagged-jagged2}.}
\label{fig:claim5-a}
\end{center}
\end{figure}

We introduce some notation. Refer to Fig.~\ref{fig:claim5-a}. Denote by $R$ the
bounded region of the plane delimited by $P$, by $M$, by $Q$, and by $l_m$.
Drawing $\Gamma$ will be only modified in the interior of $R$. Denote by $y_m$
the minimum between the $y$-coordinates of the vertices in the interior of $R$
and the $y$-coordinates of the internal vertices of $P$, $Q$, and $M$. Let
$y'_m$ be a value such that $y(l_m)<y'_m<y_m$. Let $l'_m$ be the horizontal line
$y=y'_m$. Denote by $R'$ the bounded region of the plane delimited by $P$, by
$M$, by $Q$, and by $l'_m$.  We define a closed bounded region $R_Q$ of the
plane inside $R$ as follows. Region $R_Q$ is delimited by two monotone curves
$l'$ and $l''$ from the left and from the right, respectively, where $l''$ is
the part of $Q$ delimited by $v_M$ and by the intersection point $p''$ of $Q$
with $l'_m$, and where $l'$ connects $v_M$ with a point $p'$ on $l'_m$, slightly
to the left of $l''$; curves $l'$ and $l''$ share no point other than $v_M$;
region $R_Q$ contains no vertex and no part of an edge of $G$ in its interior,
that is, the interior of $R_Q$ entirely belongs to $f$. Observe that a region
$R_Q$ with such properties always exists. The part of $\Gamma$ that lies in the
interior of $R'$ will be redrawn so that it entirely lies in $R_Q$.

For each vertex $x$ of $G$ that lies in the interior of $R$, consider the
horizontal line $l(x)$ through $x$. Let $p_1(x), p_2(x), \dots, p_{f(x)}(x)$ be
the left-to-right order of the intersection points of edges of $G$ with $l(x)$,
where $x$ is also a point $p_i(x)$ for some $1\leq i\leq f(x)$. We draw a
horizontal segment $s(x)$ inside $R_Q$, in such a way that: (i) $s(x)$ is
contained in the strip $\gamma(x)$, (ii) $s(x)$ connects a point in $l'$ with a
point in $l''$, and (iii) if vertices $x_1$ and $x_2$ inside $R$ are such that
$y(x_1)<y(x_2)$, then $s(x_1)$ lies below $s(x_2)$. For each vertex $x$ of $G$
that lies in the interior of $R$, insert points $p'_1(x), p'_2(x), \dots,
p'_{f(x)}(x)$ in this left-to right order on $s(x)$.

Also, let $p_1(l'_m), p_2(l'_m), \dots, p_{f(l'_m)}(l'_m)$ be the left-to-right
order of the intersection points of edges of $G$ with $l'_m$. Insert points
$p'_1(l'_m), p'_2(l'_m), \dots, p'_{f(l'_m)}(l'_m)$ in this left-to right order
on segment $\overline{p'p''}$.

We now redraw in $R_Q$ the vertices and edges that are inside $R$ in $\Gamma$.
Refer to Fig.~\ref{fig:claim5-a2}.  

\begin{figure}[htb]
\begin{center}
\mbox{\includegraphics[width=0.7\textwidth]{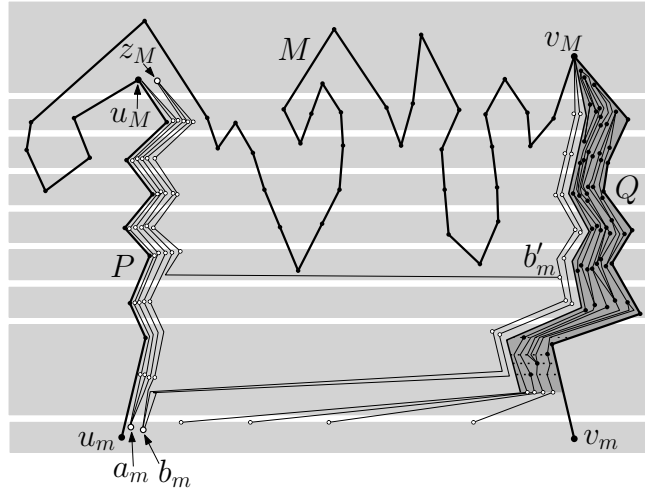}} \\
\caption{Redrawing in $R_Q$ the vertices and edges that are inside $R$ in
$\Gamma$. White circles and solid thin lines represent a drawing of
$A(u_M,v_M,f)$.}
\label{fig:claim5-a2}
\end{center}
\end{figure}

For any line segment that is part of an edge of $G$ and that connects two points
$p_i(x_1)$ and $p_j(x_2)$, with $x_1\neq x_2$, (or a point $p_i(l'_m)$ with a
point $p_j(x)$) draw a line segment connecting $p'_i(x_1)$ and $p'_j(x_2)$
(resp. connecting $p'_i(l'_m)$ with $p'_j(x)$) inside $R_Q$. Observe that, if
such a line segment exists, then $s(x_1)$ and $s(x_2)$ (resp. $\overline{p p'}$
and $s(x)$) are consecutive horizontal segments in $R_Q$. Further, the line
segments connecting points on two consecutive line segments $s(x_1)$ and
$s(x_2)$ (resp. $\overline{p p'}$ and $s(x)$) can be drawn as $y$-monotone
curves inside $R_Q$ so that they do not cross each other, give that the relative
order of the points $p'_i(x)$ on $s(x)$ preserves the order of the points
$p_i(x)$ on $l(x)$, for every vertex $x$ of $G$ in the interior of $R$, and the
relative order of the points $p'_i(l'_m)$ on $\overline{p p'}$ preserves the
order of the points $p_i(l'_m)$ on $l'_m$.

For each edge $e$ that has non-empty intersection with $R$, delete from $\Gamma$
the part $e_R$ of $e$ inside $R$. If $e$ used to intersect $l'_m$, denote by
$p_i(l_m)$ and $p_i(l'_m)$ the intersection points of $e$ with $l_m$ and $l'_m$
before $e_R$ was removed. Draw a $y$-monotone curve connecting point
$p'_i(l'_m)$ on $\overline{p p'}$ with point $p_i(l_m)$. Such curves can be
drawn without introducing crossings, given that the relative order of the points
$p'_i(l'_m)$ on $\overline{p p'}$ preserves the order of the points $p_i(l'_m)$
on $l'_m$.

We are now ready to draw $A(u_M,v_M,f)$. Draw the monotone path connecting $v_M$
with $b_m$ as a $y$-monotone curve $\cal C$ as follows. Place $b_m$ in
$\gamma(b_m)$ arbitrarily close to $P$ and to $l_m$; follow $P$ arbitrarily
close to it until reaching $l'_m$; then, continue $\cal C$ with a line segment
increasing in the $x$-direction and slightly increasing in the $y$-direction,
until reaching $l'$; then complete $\cal C$ by following $l'$ slightly to the
left of it, until reaching $v_M$. The monotone paths connecting $v_M$ with
$b'_m$ and connecting $b_m$ with $b'_m$ are arbitrarily close to the monotone
path connecting $v_M$ with $b_m$, slightly to the left of it; the $y$-coordinate
of $b'_m$ is smaller than the $y$-coordinate of every vertex of $M$. Draw the
monotone path connecting $b_m$ with $z_M$ as a $y$-monotone curve arbitrarily
close to $P$. Draw the monotone path connecting $b'_m$ with $z_M$ as a
$y$-monotone curve $\cal C'$ as follows. Start drawing $\cal C'$ from $b'_m$
with a line segment decreasing in the $x$-direction and slightly increasing in
the $y$-direction, until reaching the monotone path connecting $b_m$ and $z_M$;
then follow such a path, slightly to the right of it, until reaching $z_M$. The
remaining monotone paths lie arbitrarily close to $P$, slightly to the right of
it, and arbitrarily close to the monotone path connecting $b_m$ and $z_M$,
slightly to the left of it.

We now discuss Case 2B.

We introduce some notation. See Fig.~\ref{fig:claim5bis-1}. Denote by $l'_t$ the
horizontal line passing through the vertex $w_M$ of $M$ with largest
$y$-coordinate, and denote by $l_t$ an horizontal line in $\gamma(u_M)$ slightly
above $l_t$, and close enough to $l_t$ so that no vertex lies in the interior of
the strip delimited by $l_t$ and $l'_t$. Observe that all the vertices and edges
of $M$, of $P$, and of $Q$ are entirely below $l'_t$, except for vertex $w_M$.
Let $s(w_M)$ be a vertical segment connecting $w_M$ with $l_t$. Denote by $l'_p$
and by $l''_p$ vertical lines entirely to the right of $M$, $P$, and $Q$, with
$l''_p$ to the right of $l'_p$. Also, denote by $l'_q$ and by $l''_q$ vertical
lines entirely to the left of $M$, $P$, and $Q$, with $l''_q$ to the left of
$l'_q$. Let $R_A$ be the region delimited by $l_t$, by $l'_t$, by $l''_p$, and
by $l''_q$. Denote by $D_p$ and $D_q$ the distance between $l'_p$ and $l''_p$
and the distance between $l'_q$ and $l''_q$, respectively. Denote by $R_p$ the
bounded region of the plane delimited by $l_m$, by $l''_p$, by $l_t$, by $P$, by
the part of $M$ connecting $u_M$ with $w_M$, and by $s(w_M)$. Also, denote by
$R_q$ the bounded region of the plane delimited by $l_m$, by $l''_q$, by $l_t$,
by $Q$, by the part of $M$ connecting $v_M$ with $w_M$, and by $s(w_M)$. Drawing
$\Gamma$ will be only modified in the interior of $R_p \cup R_q$. In particular,
the vertices of $G$ and the intersection points of the edges of $G$ with the
lines delimiting $R_p \cup R_q$ will maintain the same position after the
modification.

\begin{figure}[htb]
\begin{center}
\mbox{\includegraphics[scale=0.5]{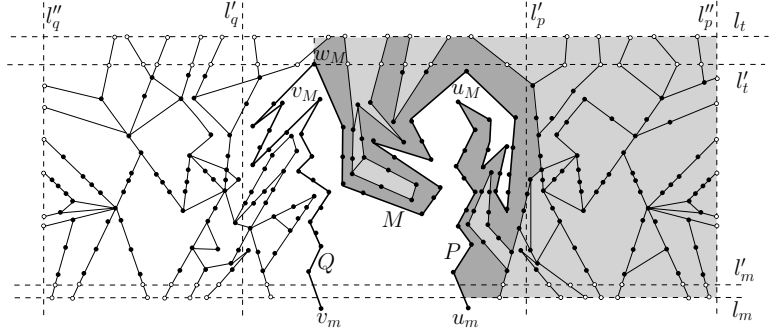}}
\caption{Drawing $\Gamma$ inside region $R_p\cup R_q$. Region $R_p$ is colored
light and dark gray. In particular, part of face $f$ inside $R_p$ is colored
dark gray. Paths $P$, $Q$, and $M$ are represented by thick lines. Intersection
points of edges with lines $l''_p$, $l''_q$, $l_m$, $l'_m$, $l_t$, and $l'_t$
are represented by white circles.}
\label{fig:claim5bis-1}
\end{center}
\end{figure}

We define some regions inside $R_p$. Let $R'_p$ be the bounded region of the
plane delimited by $l'_m$, by $l'_p$, by $l'_t$, by $P$, and by the part of $M$
connecting $u_M$ with $w_M$; let $R''_p$ be the bounded region of the plane
delimited by $l'_m$, by $l''_p$, by $l'_t$, by $P$, and by the part of $M$
connecting $u_M$ with $w_M$; let $R'''_p$ be the bounded region of the plane
delimited by $l'_m$, by $l''_p$, by $l'_p$, and by $l'_t$; finally, let
$R_{B,p}$ be the bounded region of the plane delimited by $l'_m$, by $l''_p$, by
$P$, and by $l_m$.

We analogously define some regions inside $R_q$. Let $R'_q$ be the bounded
region of the plane delimited by $l'_m$, by $l'_q$, by $l'_t$, by $Q$, and by
the part of $M$ connecting $v_M$ with $w_M$; let $R''_q$ be the bounded region
of the plane delimited by $l'_m$, by $l''_q$, by $l'_t$, by $Q$, and by the part
of $M$ connecting $v_M$ with $w_M$; let $R'''_q$ be the bounded region of the
plane delimited by $l'_m$, by $l''_q$, by $l'_q$, and by $l'_t$; finally, let
$R_{B,q}$ be the bounded region of the plane delimited by $l'_m$, by $l''_q$, by
$Q$, and by $l_m$. 

We are going to modify $\Gamma$ in such a way that no vertex and no part of an
edge lies in the interior of $R'_p \cup R'_q$. The part of $\Gamma$ outside $R_p
\cup R_q$ is not modified in the process. This modification is similar to the
one performed for the proof of Claim~\ref{cl:strict-quasijagged}. Refer to
Fig.~\ref{fig:claim5bis-2}.

We perform an horizontal shrinkage of the part of $\Gamma$ that lies inside
$R''_p$ (the vertices and edges of $P$ and $M$ stay still). This is done in such
a way that every intersection point of an edge with $l''_p$ keeps the same
$x$-coordinate, and the distance between $l''_p$ and every point in the part of
$\Gamma$ that used to lie inside $R''_p$ becomes strictly smaller than $D_p$.
Hence, the part of $\Gamma$ that used to lie inside $R''_p$ is now entirely
contained in $R'''_p$, that is the interior of $R'_p$ contains no vertex and no
part of an edge. However, some edges of $G$ (namely those that used to intersect
$l'_m$ and $l'_t$) are now disconnected; e.g., if an edge of $G$ used to
intersect $l'_m$, now such an edge contains a line segment inside $R'''_p$,
which has been shrunk, and a line segment inside $R_{B,p}$, whose drawing has
not been modified by the shrinkage. However, by construction $R_{B,p}$ does not
contain any vertex in its interior. Hence, the line segments that lie in
$R_{B,p}$ form in $\Gamma$ a planar $y$-monotone matching between a set $A_p$ of
points on $l'_m$ and a set $B_p$ of points on $l(u_m)$. As a consequence of the
shrinkage, the position of the points in $A_p$ has been modified, however their
relative order on $l'_m$ has not been modified. Thus, we can delete the line
segments in $R_{B',p}$ and reconnect the points in $B_p$ with the new positions
of the points in $A_p$ on $l'_m$ so that each edge is $y$-monotone and no two
edges intersect.

\begin{figure}[htb]
\begin{center}
\mbox{\includegraphics[scale=0.5]{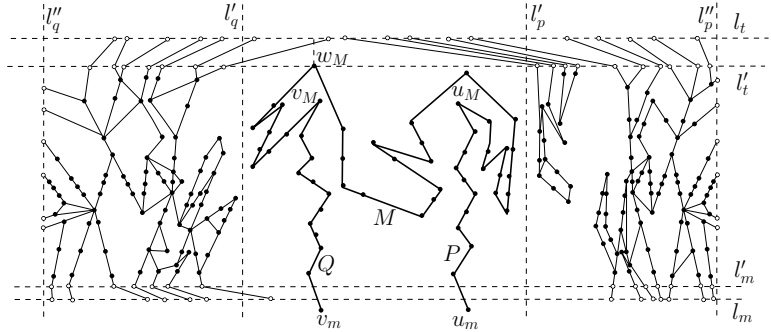}}
\caption{Drawing $\Gamma'$ of $(G,\gamma)$.}
\label{fig:claim5bis-2}
\end{center}
\end{figure}

We also perform an horizontal shrinkage of the part of $\Gamma$ that lies inside
$R''_q$ (the vertices and edges of $Q$ and $M$ stay still). This is done
symmetrically to the shrinkage of the part of $\Gamma$ that lies inside $R''_p$.
As a consequence of such a shrinkage, $R'_q$ contains no vertex and no part of
an edge.

Finally, the line segments that lie in $R_A$ form in $\Gamma$ a planar
$y$-monotone matching between a set $A'$ of points on $l'_t$ and a set $B'$ of
points on $l_t$. As a consequence of the shrinkage, the position of the points
in $A'$ has been modified, however their relative order on $l'_t$ has not been
modified. Thus, we can delete the line segments in $R_A$ and reconnect the
points in $B'$ with the new positions of the points in $A'$ on $l'_t$ so that
each edge is $y$-monotone and no two edges intersect.

We thus obtain a planar $y$-monotone drawing $\Gamma'$ of $G$ in which no vertex
and no part of an edge lies in the interior of $R'_p \cup R'_q$. Since no vertex
changed its $y$-coordinate and every edge is $y$-monotone, $\Gamma'$ is a strip
planar drawing of $(G,\gamma)$.

\begin{figure}[htb]
\begin{center}
\mbox{\includegraphics[scale=0.5]{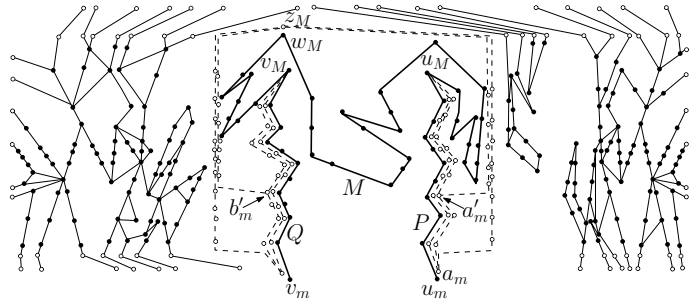}}
\caption{Drawing $A(u_M,v_M,f)$ in $\Gamma'$.}
\label{fig:claim5bis-3}
\end{center}
\end{figure}

We are now ready to draw $A(u_M,v_M,f)$. Refer to Fig.~\ref{fig:claim5bis-3}.
Place $a_m$ in point arbitrarily close to $P$, slightly to the right of it, and
slightly below $l_m$. Draw the monotone path connecting $u_M$ with $a_m$ as a
$y$-monotone curve arbitrarily close to $P$, and slightly to the right of it.
Draw the monotone path connecting $u_M$ with $a'_m$ and the monotone path
connecting $a'_m$ with $a_m$ as $y$-monotone curves arbitrarily close to the
monotone path connecting $u_M$ with $a_m$, slightly to the right of it, in such
a way that $a'_m$ has a $y$-coordinate smaller than the one of every vertex of
$P$ and $M$ in $\gamma(a'_m)$. Draw the monotone path connecting $a_m$ with
$z_M$ as a $y$-monotone curve $\cal C$ as follows. Starting from $a_m$, follow
the monotone path connecting $a_m$ with $a'_m$, slightly to the right of it,
until reaching $l'_m$. Continue drawing $\cal C$ with a line segment increasing
in the $x$-direction and slightly increasing in the $y$-direction. Just before
reaching $l'_p$, stop increasing the $x$-coordinates along $\cal C$, and
continue drawing $\cal C$ as a vertical line segment, arbitrarily close to
$l'_p$, slightly to the left of it, until reaching $l'_t$. Then, finish the
drawing of $\cal C$ with a line segment decreasing in the $x$-direction and
slightly increasing in the $y$-direction, until reaching a point on $s(w_M)$
arbitrarily close to $w_M$, on which we place $z_M$. Draw the monotone path
connecting $a'_m$ with $z_M$ as a $y$-monotone curve $\cal C'$ as follows.
Starting from $a'_m$, draw a line segment increasing in the $x$-direction and
slightly increasing in the $y$-direction, until reaching the monotone path
connecting $a_m$ with $z_M$. Then, follow such a path, slightly to the left of
it, until reaching $z_M$. Finally, the drawing of the monotone paths connecting
$v_M$ with $b_m$, connecting $v_M$ with $b'_m$, connecting $b_m$ with $b'_m$,
connecting $b_m$ with $z_M$, and connecting $b'_m$ with $z_M$ are constructed
analogously.

This concludes the construction of a strip planar drawing of $(G^+,\gamma^+)$.

It remains to prove that $(G^+,\gamma^+)$ is quasi-jagged. Every face $g\neq f$
of $G$ has not been altered by the augmentation inside $f$, hence, for any two
visible local minimum $u_m$ and local maximum $u_M$ for $g$, one of the two
paths connecting $u_m$ and $u_M$ in $G$ is monotone. Denote by
$f_1,f_2,\dots,f_6$ the faces into which $f$ is split by the insertion of
$A(u_M,v_M,f)$ (see Fig.~\ref{fig:augmentation}(b)).

Each of faces $f_3$, $f_4$, $f_5$, and $f_6$ is delimited by two monotone paths,
hence, for each $i=3,\dots,6$, the only pair of visible local minimum and local
maximum for $f_i$ is connected by a monotone path in $C_{f_i}$.

Face $f_2$ contains two local minima, namely $a_m$ and $b_m$, and one local
maximum, namely $z_M$, that are not incident to $f$. However, $u_M$ and $z_M$
are the only local maxima for $f_2$ that are visible with $a_m$; also, $a_m$ and
$b_m$ are the only local minima for $f_2$ that are visible with $z_M$; further,
$z_M$ and $v_M$ are the only local maxima for $f_2$ that are visible with $b_m$.
For all such pairs of visible local minimum and maximum, there exists a monotone
path in $C_{f_2}$ connecting them. Finally, every pair of visible local minimum
and maximum for $f_2$ which does not include $a_m$, $z_M$, or $b_m$ is also a
pair of visible local minimum and maximum for $f$, hence it is connected by the
same monotone path in $C_{f_2}$ as in $C_f$.

Analogously, each of vertices $a'_m$, $z_M$, and $b'_m$ only participates in two
pairs of visible local minimum and maximum for $f_1$, where the second vertex of
each pair is one between $u_M$, $a'_m$, $z_M$, $b'_m$, and $v_M$. For all such
pairs, monotone paths in $C_{f_1}$ exist by construction. Finally, every pair of
visible local minimum and maximum for $f_1$ which does not include $a'_m$,
$z_M$, or $b'_m$ is also a pair of visible local minimum and maximum for $f$,
hence it is connected by the same monotone path in $C_{f_1}$ as in $C_f$.
\end{proof}

Claims~\ref{cl:quasijagged-jagged1}--\ref{cl:quasijagged-jagged2} imply
Lemma~\ref{le:quasijagged-jagged}, as proved in the following.

First, we prove that the repetition of the above described augmentation leads to a jagged instance $(G^*,\gamma^*)$ of the strip planarity testing problem.
For an instance $(G,\gamma)$ and for a face $g$ of $G$, denote by $n(g)$ the number of vertices that are local minima for $g$ but not global minima for $g$,
plus the number of vertices that are local maxima for $g$ but not global maxima
for $g$. Also, let $n(G)=\sum_g n(g)$, where the sum is over all faces $g$ of $G$.
We claim that, when one of the augmentations of Cases 1 and 2 is performed
and instance $(G,\gamma)$ is transformed into an instance $(G^+,\gamma^+)$, we
have  $n(G^+)\leq n(G)-1$. The claim implies that eventually $n(G^*)=0$, hence
$(G^*,\gamma^*)$ is jagged.

We prove the claim.  When a face $f$ of $G$ is augmented as in Case~1 or in Case~2, for each face $g\neq f$ and for each vertex $u$ incident to $g$, vertex $u$ is a local minimum, a local maximum, a global minimum, or a global maximum for $g$ in $(G^+,\gamma^+)$ if and only if it is a local minimum, a local maximum, a global minimum, or a global maximum for $g$ in $(G,\gamma)$, respectively. Hence, it suffices to prove that $\sum n(f_i)\leq n(f)-1$, where the sum is over all the faces $f_i$ that are created from the augmentation inside $f$.

Suppose that Case~1 is applied to insert a monotone path between vertices $v''$ and $z$ inside $f$. Such an insertion splits $f$ into two faces, which we denote by $f_1$ and $f_2$, as in Fig.~\ref{fig:augmentation}(a). Face $f_2$ is delimited by two monotone paths, hence $n(f_2)=0$. Every vertex inserted into $f$ is neither a local maximum nor a local minimum for $f_1$. As a consequence, no vertex $x$ exists such that $x$ contributes to $n(f_1)$ and $x$ does not contribute to $n(f)$. Further, vertex $v'$ is a global minimum for $f_1$, by construction, and it is a local minimum but not a global minimum for $f$. Hence, $v'$ contributes to $n(f)$ and does not contribute to $n(f_1)$. It follows that $n(f_1)+n(f_2)\leq n(f)-1$.

Suppose that Case~2 is applied to insert plane graph $A(u_M,v_M,f)$ inside face $f$. Such an insertion splits $f$ into six faces, which are denoted by
$f_1,\dots,f_6$, as in Fig.~\ref{fig:augmentation}(b). Every vertex of $A(u_M,v_M,f)$ incident to a face $f_i$, for some $1\leq i\leq 6$, is either a
global maximum for $f_i$, or a global minimum for $f_i$, or it is neither a local
maximum nor a local minimum for $f_i$. As a consequence, no vertex $x$
exists such that $x$ contributes to some $n(f_i)$ and $x$ does not contribute to
$n(f)$. Further, for each vertex $x$ that contributes to $n(f)$, there exists at
most one face $f_i$ such that $x$ contributes to $n(f_i)$. Finally, vertex
$u'_m$ of $M$ is a global minimum for $f_1$, by construction, and it is a local
minimum but not a global minimum for $f$. Hence, $u'_m$ contributes to $n(f)$
and does not contribute to $n(f_i)$, for any $1\leq i\leq 6$. It follows that
$\sum_{i=1}^6 n(f_i) \leq n(f)-1$.

Second, $(G^*,\gamma^*)$ can be constructed from $(G,\gamma)$ in polynomial time. Namely, the number of local minima (maxima) for a face $f$ that are not global minima (maxima) for $f$ is at most the number of vertices of $f$. Hence, the number of such minima and maxima over all the faces of $G$, which is equal to $n(G)$, is linear in $n$. Since a linear number of vertices are introduced in $G$ whenever the augmentation described above is performed, and since the augmentation is performed at most $n(G)$ times, it follows that the construction of $(G^*,\gamma^*)$ can be accomplished in polynomial time.

Third, $(G^*,\gamma^*)$ is an instance of the strip planarity testing problem that is equivalent to $(G,\gamma)$. This directly comes from repeated
applications of Claims~\ref{cl:quasijagged-jagged1} and~\ref{cl:quasijagged-jagged2}.

\subsection{Testing Strip Planarity for Jagged Instances}\label{subse:jagged-upward}

In this section we show how to test in polynomial time whether a jagged
instance $(G,\gamma)$  of the strip planarity testing problem is strip planar. Recall that the associated directed graph of $(G,\gamma)$ is the directed plane graph $\overrightarrow{G}$ obtained from $(G,\gamma)$ by orienting each edge $(u,v)$ in $G$ from $u$ to $v$ if and only if
$\gamma(v)=\gamma(u)+1$. We have the following:

\begin{lemma} \label{le:upward-strip}
A jagged instance $(G,\gamma)$ of the strip planarity testing problem is strip
planar if and only if the associated directed graph $\overrightarrow{G}$ of $(G,\gamma)$ is upward
planar.
\end{lemma}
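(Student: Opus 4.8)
The necessity direction is immediate and was already observed in the introduction: any strip planar drawing of $(G,\gamma)$ is by definition a planar drawing with $y$-monotone edges in which every edge $(u,v)$ with $\gamma(u)<\gamma(v)$ is traversed upward, so orienting the edges as in $\overrightarrow{G}$ turns it into an upward planar drawing of $\overrightarrow{G}$. Hence the entire content of the lemma lies in the sufficiency: from the mere existence of an upward planar drawing of $\overrightarrow{G}$ I must build a strip planar drawing of $(G,\gamma)$, and the jagged hypothesis is exactly what makes this possible.

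First I would record the combinatorial shape of a face of a jagged instance, since this is where the hypothesis is exploited. Because $(G,\gamma)$ is strict and proper, every edge of $\overrightarrow{G}$ joins two consecutive levels, so $\gamma$ is a \emph{proper leveling} of $\overrightarrow{G}$: each edge spans exactly one level. For a face $f$, let $a$ and $b$ be the minimum and maximum value of $\gamma$ over the vertices of $C_f$. The jagged condition forces every local minimum of $C_f$ to sit at level $a$ and every local maximum at level $b$; therefore, walking along $C_f$, the sequence of $\gamma$-values alternately climbs monotonically from $a$ to $b$ and descends monotonically from $b$ to $a$. In other words $C_f$ is a concatenation of monotone paths, each joining a global minimum at level $a$ to a global maximum at level $b$. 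This is precisely the ``two-sided'' shape that every internal face of a planar $st$-graph has, except that $f$ may carry several minima and several maxima on its boundary.

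The plan for the sufficiency is then to augment $\overrightarrow{G}$ into a planar $st$-graph without destroying properness, and afterwards to draw that $st$-graph level by level. Starting from an upward planar embedding of $\overrightarrow{G}$ (which exists by hypothesis and, by the classical characterization of Di Battista and Tamassia, may be assumed to extend to a planar $st$-graph embedding), I would saturate each face: inside $f$ I insert \emph{monotone} paths connecting its several minima and maxima so that every resulting subface keeps a single source and a single sink on its boundary, and I add a global source $s$ below all vertices and a global sink $t$ above them, joined to the extremal vertices by monotone paths. Realizing every added connection as a subdivided (hence monotone) path guarantees that $\gamma$ extends to the new internal vertices and that the augmented graph $\overrightarrow{G^*}$ remains proper, each edge still spanning a single level. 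The jagged shape described above is exactly what allows these monotone paths to be routed planarly inside each face: since all minima of $f$ share level $a$ and all maxima share level $b$, the paths can be nested between consecutive monotone sides of $C_f$ without crossings.

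Finally I would use the fact that a proper planar $st$-graph is level planar for the leveling given by its vertex levels: placing each vertex $v$ at height $y(v)=\gamma^*(v)$ and ordering the vertices of each level by the left-to-right order induced by the $st$-embedding yields a planar drawing with $y$-monotone edges, since the embedding of a planar $st$-graph induces a globally consistent left-to-right order and, each edge joining consecutive levels, no edge can be forced to meet a level line more than once. Deleting $s$, $t$, and all the dummy vertices and edges introduced by the augmentation leaves the original vertices at $y=\gamma(v)$ with every original edge $y$-monotone between consecutive levels, i.e.\ a level planar drawing of $(G,\gamma)$, which is in particular strip planar. The main obstacle, and the only place where the jagged hypothesis is genuinely used, is the augmentation step: I must show that the minima and maxima of each face can be linked by monotone paths so as to obtain a single source and a single sink per face while keeping the drawing planar and the leveling proper. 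In a non-jagged face a local maximum could lie strictly below the global maximum level (or a local minimum strictly above the global minimum level), and then no monotone path could separate the boundary pieces consistently with $\gamma$; the reductions of Lemmas~\ref{le:general-strict}--\ref{le:quasijagged-jagged} are exactly what remove these obstructions in advance.
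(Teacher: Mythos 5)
Your necessity argument and your description of the boundary of a face of a jagged instance (a concatenation of monotone paths joining global minima at a common level $a$ to global maxima at a common level $b$) are both fine and match the paper. The gap is in the saturation step, and it is fatal. Inserting a monotone path from a source $s_2$ of a face $f$ to a sink $t_1$ of $f$ splits $f$ into one face bounded by the new path and the boundary path from $s_2$ to $t_1$ (which indeed has a single source and sink) and a second face that still contains \emph{all} the sources and sinks of $f$, including $s_2$ and $t_1$, because the new path contributes an outgoing edge at $s_2$ and an incoming edge at $t_1$ to that face as well. So source-to-sink monotone paths never decrease the number of sources of a face. To destroy a source you must give it an incoming edge, and in a jagged face every source sits at the global minimum level $a$ of that face, so the only available incoming edges are intra-level ones from another source --- precisely the non-monotone dummy edges the paper uses. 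Hence the proper leveled planar $st$-graph $\overrightarrow{G^*}$ your construction requires does not exist in general.

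This is not a repairable technicality, because your argument proves too much: since the dummy paths are deleted at the end and the original edges already join consecutive levels, it would show that every jagged strip planar instance is level planar. The $4$-cycle $(a_1,b_1,a_2,b_2)$ with $\gamma(a_1)=\gamma(a_2)=1$ and $\gamma(b_1)=\gamma(b_2)=2$ is jagged, strip planar, and has an upward planar associated digraph, yet it is a $K_{2,2}$ spanning two levels and therefore not level planar: whichever left-to-right orders you fix on the two level lines, two of its four edges must cross. The paper sidesteps exactly this by never aiming at a level drawing. It augments with intra-strip dummy \emph{edges} between two sources or two sinks of a face (legitimate because jaggedness places them in the same strip, and a dummy $y$-monotone edge drawn inside a single strip is harmless and later removed), and then constructs the drawing face by face in a topological order of the directed dual $\overrightarrow{G}_{s^*t^*}$, drawing the right path of each face as a $y$-monotone curve with each vertex anywhere inside its strip rather than on a prescribed horizontal line. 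If you rework your write-up, replace the monotone-path saturation and the level-planarity endgame with that strip-based, dual-ordered construction.
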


\remove{
\begin{proofsketch}
The necessity is trivial. For the sufficiency, we first insert dummy edges in
$\overrightarrow{G}$ to augment it to a \emph{plane $st$-digraph}
$\overrightarrow{G}_{st}$, which is an upward planar directed graph with exactly
one source $s$ and one sink $t$ incident to its outer face~\cite{dt-aprad-88}.
Each face $f$ of $\overrightarrow{G}_{st}$ consists of two monotone paths,
called \emph{left path} and \emph{right path}, where the left path has $f$ to
the right when traversing it from its source to its sink. The inserted dummy
edges only connect two sources or two sinks of each face of
$\overrightarrow{G}$. Since $(G,\gamma)$ is jagged, the end-vertices of each
dummy edge are in the same strip.

We divide the plane into $k$ horizontal strips. We compute an upward planar
drawing of $\overrightarrow{G}_{st}$ starting from a $y$-monotone drawing of the
leftmost path of $\overrightarrow{G}_{st}$ and adding to the drawing one face at
a time, in an order corresponding to any linear extension of the partial order
of the faces induced by the directed dual graph of
$\overrightarrow{G}_{st}$~\cite{dt-aprad-88}. When a face is added to the
drawing, its left path is already drawn as a $y$-monotone curve. We draw the
right path of $f$ as a $y$-monotone curve in which each vertex $u$ lies inside
strip $\gamma(u)$, hence the rightmost path of the graph in the current drawing
is always represented by a $y$-monotone curve. A strip planar drawing of
$(G,\gamma)$ can be obtained from the drawing of $\overrightarrow{G}_{st}$ by
removing the dummy edges.
\end{proofsketch}
}
\begin{proof}
The necessity is trivial, given that a strip planar drawing of $(G,\gamma)$ is
also an upward planar drawing of $\overrightarrow{G}$, by definition.

We prove the sufficiency. A directed plane graph $\overrightarrow{G}$ is called
\emph{plane $st$-digraph} if it has exactly one source $s$ and one sink $t$ such
that $s$ and $t$ are both incident to the outer face of $\overrightarrow{G}$.
Each face $f$ of a plane $st$-digraph consists of two monotone paths called
\emph{left path} and \emph{right path}, where the left path has $f$ to the right
when traversing it from its source to its sink. 

Since $\overrightarrow{G}$ is upward planar, $\overrightarrow{G}$ can be
augmented~\cite{dt-aprad-88} to a plane $st$-digraph $\overrightarrow{G}_{st}$.
Also, this can be done by adding only \emph{dummy} edges $(u,v)$ such that $u$
and $v$ are incident to the same face $f$, and $u$ and $v$ are either both
sources or
both sinks in $C_f$ (when such a cycle is oriented according to
$\overrightarrow{G}$). Note that, since $(G,\gamma)$ is jagged, each dummy edge
$(u,v)$ is such that
$\gamma(u)=\gamma(v)$.

We now compute the \emph{directed dual} $\overrightarrow{G}_{s^* t^*}$ of
$\overrightarrow{G}_{st}$. The vertices of $\overrightarrow{G}_{s^* t^*}$ are
the faces of $\overrightarrow{G}_{st}$; two special vertices $s^*$ and
$t^*$ represent the outer face. There is an edge $(f,g)$ in
$\overrightarrow{G}_{s^* t^*}$ if face $f$ shares an edge $(u,v) \neq (s,t)$
with face $g$, and face $f$ is on the left side of $(u,v)$ when such an edge is
traversed
from $u$ to $v$. Graph $\overrightarrow{G}_{s^* t^*}$ is a plane
$st$-digraph~\cite{dt-aprad-88}.

We divide the plane into $k$ horizontal strips of fixed height, each
corresponding to one of the strips of $(G,\gamma)$.

We compute an upward planar drawing of $\overrightarrow{G}_{st}$ as follows.
First,
consider the leftmost path $p_l$ of $\overrightarrow{G}_{st}$, where
$p_l=(s=v_1^1, \dots, v_1^{h(1)}, v_2^1, \dots, v_2^{h(2)},\dots, v_k^1, \dots,
v_k^{h(k)} = t)$, with $\gamma(v_i^1) = \dots =
\gamma(v_i^{h(i)}) = i$, for $i = 1, \dots, k$. Path $p_l$ is drawn as a
$y$-monotone curve in which each vertex $u \in p_l$ lies inside strip
$\gamma(u)$.
Then, we add the faces of $\overrightarrow{G}_{st}$ one at a time, in such a way
a face is considered
after all its predecessors in $\overrightarrow{G}_{s^* t^*}$. When a face $f$ is
considered, its left path has been already drawn as a $y$-monotone curve. We
draw the right path of $f$ as a $y$-monotone curve in which each vertex $u$ lies
inside strip $\gamma(u)$. This implies that the rightmost path of the graph in
the current drawing is represented by a $y$-monotone curve.

A strip planar drawing of $(G,\gamma)$ can be obtained from the drawing of
$\overrightarrow{G}_{st}$ by removing the dummy edges.
\end{proof}

We thus obtain the following:

\begin{theorem} \label{th:algorithmic}
The strip planarity testing problem can be solved in polynomial time for instances $(G,\gamma)$ such that $G$ is a plane graph.
\end{theorem}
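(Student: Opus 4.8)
The plan is to establish the theorem as a direct corollary of the chain of reductions developed in Sections~\ref{subse:general-strict}--\ref{subse:jagged-upward}, combined with the known polynomial-time algorithm for testing upward planarity of a plane digraph~\cite{bdlm-udtg-94}. First I would recall from the preliminaries that it suffices to handle instances whose graph $G$ carries a prescribed \emph{plane} embedding (a combinatorial embedding together with a choice of outer face): since a fixed combinatorial embedding admits only $O(n)$ choices for the outer face, running the plane-embedding test for each choice and returning the disjunction solves the combinatorial-embedding case with only a linear-factor overhead. Throughout, I may assume by the preliminaries that the instance is proper, and—as in the lemmas—I first treat the $2$-connected case.

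Given a proper $2$-connected instance $(G,\gamma)$, the algorithm applies the three reductions in sequence. By Lemma~\ref{le:general-strict}, in polynomial time I either conclude that $(G,\gamma)$ is not strip planar, in which case the algorithm halts and reports a negative answer, or I obtain an equivalent \emph{strict} instance. Feeding this instance into Lemma~\ref{le:strict-quasijagged} yields, in polynomial time, an equivalent \emph{quasi-jagged} instance; feeding the latter into Lemma~\ref{le:quasijagged-jagged} yields, again in polynomial time, an equivalent \emph{jagged} instance $(G^\dagger,\gamma^\dagger)$. Because each step preserves strip planarity, $(G^\dagger,\gamma^\dagger)$ is strip planar if and only if the original $(G,\gamma)$ is. Finally, by Lemma~\ref{le:upward-strip}, $(G^\dagger,\gamma^\dagger)$ is strip planar if and only if its associated directed graph $\overrightarrow{G^\dagger}$ is upward planar; since $\overrightarrow{G^\dagger}$ inherits a fixed plane embedding, its upward planarity is decidable in polynomial time~\cite{bdlm-udtg-94}, which completes the test.

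The point requiring the most care is verifying that the overall running time remains polynomial, since each reduction inserts dummy vertices and edges and thus enlarges the instance passed to the next stage. The key observation is that every lemma runs in time polynomial in the size of \emph{its own} input, and that the size blow-up at each stage is itself only polynomial: each augmentation inserts a linear number of vertices and is performed only a polynomial number of times—bounded respectively by the number of intra-strip edges, by the number of triples of visible local minimum/maximum and face, and by the quantity $n(G)$ counting local extrema that are not global extrema. Consequently the composition of a constant number of polynomial-size, polynomial-time reductions, followed by a single polynomial-time upward-planarity test, runs in polynomial time overall, and the equivalences chain to give correctness.

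What lies beyond the $2$-connected case—extending to simply-connected and disconnected plane graphs—is deferred to the conclusions, where standard decomposition arguments (splitting at cut vertices and handling connected components independently within the strip structure) reduce these settings to the $2$-connected one treated here without affecting the polynomial bound.
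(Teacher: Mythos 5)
Your proposal is correct and follows essentially the same route as the paper's own proof: chain Lemmata~\ref{le:general-strict}--\ref{le:quasijagged-jagged} to reach an equivalent jagged instance, apply Lemma~\ref{le:upward-strip}, and conclude with the fixed-embedding upward planarity test of Bertolazzi et al.~\cite{bdlm-udtg-94}. Your added remarks on the polynomial composition of the reductions and the $O(n)$ outer-face choices are sound and merely make explicit what the paper leaves implicit.
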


\begin{proof}
By Lemmata~\ref{le:general-strict}--\ref{le:quasijagged-jagged}, it is possible
to reduce in polynomial time any instance of the strip planarity testing problem
to an equivalent jagged instance $(G,\gamma)$. By Lemma~\ref{le:upward-strip}, $(G,\gamma)$ is strip planar if
and only if the associated directed plane graph $\overrightarrow{G}$ of $(G,\gamma)$ is upward
planar. Finally, by the results of Bertolazzi et al.~\cite{bdlm-udtg-94}, the
upward planarity of $\overrightarrow{G}$ can be tested in polynomial time.
\end{proof}

\section{Conclusions} \label{se:conclusions}

In this paper, we introduced the strip planarity testing problem and showed how
to solve it in polynomial time if the input graph has a prescribed plane
embedding.

We now sketch how to extend the proofs in this paper to simply-connected and even non-connected plane graphs.

Suppose that the input graph $(G,\gamma)$ is simply-connected (possibly not $2$-connected). The steps of the algorithm are the same. In particular, the transformation from a general instance to a strict instance is exactly the same. The transformation of a strict instance into a quasi-jagged instance has some differences with respect to the $2$-connected case. In fact, the visibility between local minima and maxima for a face $f$ of $G$ is now defined with respect to {\em occurrences} of such minima and maxima along $f$. Thus, the goal of such a transformation is to create an instance in which, for every face $f$ and for every pair of visible occurrences $\sigma_i(u_m)$ and $\sigma_j(u_M)$ of a local minimum $u_m$ and a local maximum $u_M$ for $f$, respectively, there is a monotone path between $\sigma_i(u_m)$ and $\sigma_j(u_M)$ in $C_f$. Such a property is achieved by using the same techniques as in Claim~\ref{cl:strict-quasijagged}. The transformation of a quasi-jagged instance into a jagged instance is almost the same as in the $2$-connected case, except that the $2$-connected components of $G$ inside a face $f$ have to be suitably squeezed along the monotone paths of $f$ to allow for a drawing of a monotone path between $v''$ and $z$ (see Case 1 of Sect.~\ref{subse:quasijagged-jagged}) or for a drawing of plane graph $A(u_M,v_M,f)$ (see Case 2 of Sect.~\ref{subse:quasijagged-jagged}). This is accomplished with the same techniques as in Claims~\ref{cl:quasijagged-jagged1} and~\ref{cl:quasijagged-jagged2}. Finally, the proof of the equivalence between the strip planarity of a jagged instance and the upward planarity of its associated directed graph does not require the instance to be $2$-connected, hence such an equivalence holds as it is.

Suppose now that the input graph $(G,\gamma)$ is not connected. Test individually the strip planarity of each connected component of $(G,\gamma)$. If one of the tests fails, then $(G,\gamma)$ is not strip planar. Otherwise, construct a strip planar drawing of each connected component of $(G,\gamma)$. Place the drawings of the connected components containing edges incident to the outer face of $G$ side by side. Repeatedly insert connected components in the internal faces of the currently drawn graph $(G',\gamma)$ as follows. If a connected component $(G_i,\gamma)$ of $(G,\gamma)$ has to be placed inside an internal face $f$  of $(G',\gamma)$, check whether $\gamma(u_M)\leq \gamma(u^f_M)$ and whether $\gamma(u_m)\geq \gamma(u^f_m)$, where $u_M$ ($u_m$) is a vertex of $(G_i,\gamma)$ such that $\gamma(u_M)$ is maximum (resp. $\gamma(u_m)$ is minimum) among the vertices of $G_i$, and where  $u^f_M$ ($u^f_m$) is a vertex of $C_f$ such that $\gamma(u^f_M)$ is maximum (resp. $\gamma(u^f_m)$ is minimum) among the vertices of $C_f$. If the test fails, then $(G,\gamma)$ is not strip planar. Otherwise, using a technique analogous to the one of Claim~\ref{cl:strict-quasijagged}, a strip planar drawing of $(G',\gamma)$ can be modified so that two consecutive global minimum and maximum for $f$ can be connected by a $y$-monotone curve $\cal C$ inside $f$. Suitably squeezing a strip planar drawing of $(G_i,\gamma)$ and placing it arbitrarily close to $\cal C$ provides a strip planar drawing of $(G'\cup G_i,\gamma)$. Repeating such an argument leads either to conclude that $(G,\gamma)$ is not strip planar, or to construct a strip planar drawing of $(G,\gamma)$.

The main question raised by this paper is whether the strip planarity testing
problem can be solved in polynomial time or is rather $\cal NP$-hard for graphs
without a prescribed plane embedding. The problem is intriguing even if the
input graph is a tree.

\remove{Finally, it would be interesting to understand whether the strip planarity
testing problem can be reduced in polynomial time to the upward planarity
testing problem and to the clustered planarity testing problem, or vice versa.}

\bibliography{Strips}
\bibliographystyle{titto-lncs-02}

\end{document}